\newif\ifproofs\proofstrue
\newif\ifprivate\privatefalse
\newif\ifok\oktrue
\newif\ifworking\workingtrue
\newif\ifanswered\answeredfalse
\newif\iflong\longfalse
\documentclass[a4paper,english]{llncs}
\usepackage{xifthen}
\usepackage{amsmath, amssymb}
\usepackage{courier,xspace}
\usepackage{hyperref}
\usepackage{varioref}
\usepackage[]{units}
\usepackage{stmaryrd}
\usepackage{mathabx}
\usepackage{url}
\usepackage{pdflscape}
\usepackage[dvipsnames]{xcolor}
\usepackage{cite}
\usepackage{makeidx}
\usepackage{enumerate}  
\usepackage{color}
\usepackage{soul}

\usepackage[draft,silent,margin]{fixme}

\fxsetup{theme=color}
\definecolor{fxtarget}{rgb}{0.8000,0.0000,0.0000}
\FXRegisterAuthor{ebj}{EBJ}{Einar}


\fxusetheme{color}

\FXRegisterAuthor{ms}{ems}{ms}     

\fxsetup{mode=multiuser}
\fxsetup{targetlayout=color}
\fxsetface{margin}{\tiny}

\fxsetup{status=draft}

\fxuseenvlayout{color}
\fxsetup{envface=\scriptsize\tt}  

\fxsetup{inlineface=\bfseries}
\fxsetup{envlayout=color}







\newcommand{\casefont}            {\it}
\newcounter{casecounter}
\renewcommand{\thecasecounter}{\arabic{casecounter}}

\newcommand{\Casestring}          {Case}

\newcommand{\Subcasestring}       {Subcase}
\newcommand{\Subsubcasestring }   {Subsubcase}
\newcommand{\Basecasestring}      {Base step}
\newcommand{\Inductioncasestring} {Induction step}

\newenvironment{namedcase}[1]{
  \smallbreak \pagebreak[3]
  \noindent #1 \\}%


\newcommand{\intersect}        {\cap}
\newcommand{\angles}[1]        {\langle #1 \rangle}
\newcommand{\dom}              {\mathit{dom}}

\newcommand{\p}                {\vdash}                  

\newcommand{\of}               {{:}}                     
\newcommand{\OF}               {\mathrel{:}}             
\newcommand{\bnfdef}            {::=}
\newcommand{\bnfbar}            {\mid}

\newcommand{\scong}       {\equiv}         

\newcommand{\effect}[2]{\langle #1,#2\rangle} 
\newcommand{\ok}{\kw{ok}}


\newcommand{\tacins}           {\kw{in}}
\newcommand{\tacouts}          {\kw{out}}
\newcommand{\tacopens}         {\kw{open}}
\newcommand{\tacin}[1]         {\tacins\: #1}
\newcommand{\tacout}[1]        {\tacouts\: #1}
\newcommand{\tacopen}[1]       {\tacopens\: #1}

\newcommand{\tickins}          {\kw{tick}?}
\newcommand{\tickouts}         {\kw{tick}!}

\newcommand{\kw}[1]           {\mbox{\texttt{\bfseries #1}}}
\newcommand{\0}               {\kw{0}}
\newcommand{\repls}           {!}
\renewcommand{\parallel}      {\mathrel{\mid}}

\newcommand{\meta}[1]{\overline{#1}}



\newcommand{\names}       {\mathit{names}}



\newcommand{\C}           {\mathcal{C}}

\newcommand{\Chole}         {[\cdot]}
\newcommand{\Cambwith}[1]   {\mathcal{C}[#1]}
\newcommand{\Cwithhole}   {\C\Chole}
\newcommand{\Cwith}[1]    {\C[#1]}








\newcommand{\tamb}[2]          {#1[#2]}           
\newcommand{\nubind}[2]        {(\nu #1)\, #2}



\newcommand{\tcreseat}      {\kw{c}}
\newcommand{\tctickin}      {\tickins}
\newcommand{\tcreseatfon}   {\gamma}   
\newcommand{\tcin}{\kw{in}\:}
\newcommand{\tcopen}{\kw{open}\:}
\newcommand{\tcout}{\kw{out}\:}



\makeatletter
\newcommand{\xrightarrowtriangle}[2][]{\ext@arrow 0359\xrightarrowtrianglefill@{#1}{#2}}
\newcommand{\xrightarrowtrianglefill@}{%
  \arrowfill@\relbar\relbar{\mathrel{\smash{\rightarrowtriangle}\vphantom{\rightarrow}}}%
}
\makeatother

%
\newcommand{\red}             {\rightarrowtriangle}




\newcommand{\nmany}[1]                       {\widetilde{#1}}    

\newcommand{\alabtick}                       {\kw{tick}}




\newcommand{\trans}[1]        {\mathrel{\xrightarrow{#1}}}
\newcommand{\wtrans}[1]       {\ifthenelse{\isempty{#1}}{\xRightarrow{\;\;\;\;\;}}{\xRightarrow{#1}}}

\newcommand{\transtick}[1]    {\trans{\alabtick}}

\newcommand{\Tambs}                    {T}     

\newcommand{\Tts}                      {\textit{tkn}}
\newcommand{\Ttla}                     {\textit{subs}}
\newcommand{\Tcap}                     {\textit{bnd}}
\newcommand{\Tres}                     {\textit{req}}
\newcommand{\Tprov}                     {\textit{prov}}
\newcommand{\Tresj}                    {\textit{cap}}

\newcommand{\Tleq}                     {\leq}

\newcommand{\Tamb}[3]                  {\angles{#1,#2,#3}}

\newcommand{\Tcur}                     {\kw{this}}


\newcommand{\E}                        {\Gamma}

\newcommand{\Ea}                       {\Gamma}
\newcommand{\Eequiv}                   {\sim}
\newcommand{\Ec}                       {\Delta}

\newcommand{\Eplus}                    {\oplus}

\newcommand{\Eleq}                     {\sqsubseteq}

\newcommand{\Eempty}                   {\emptyset}
\newcommand{\ewithout}                 {\setminus}   

\newcommand{\judgeold}[5]                 {#1;\, #4 \p #3 \OF #5;\, #2}









\newcommand{\sbarbson}[1]        {{\downarrow_{#1}}}
\newcommand{\sbarbsonwith}[2]    {{\downarrow_{#1}^{#2}}}


  \newenvironment{ruleset}{

\noindent\mbox{}\noindent\hrulefill
\vspace{-2mm}
\begin{displaymath}
  \begin{array}[b]{c}}{

\end{array}\end{displaymath}
\hrulefill\mbox{}
}

\newcommand{\ruleskip}   {\\ \\[-0.3em]}

\newcommand{\rn}[1]{\mbox{\textsc{#1}}}        

\makeatletter
\newif\if@qeded\global\@qededfalse
\def\proofsketch{\global\@qededfalse\@ifnextchar[{\@xproofsketch}{\@proofsketch}}
\def\endproofsketch{\if@qeded\else\qed\fi\endtrivlist}
\def\endproofsketch{\if@qeded\else\fi\endtrivlist}  
\def\qed{\unskip \hfill{\unitlength1pt\linethickness{.4pt}\framebox(6,6){}}
\global\@qededtrue}
\def\@proofsketch{\noindent\trivlist \item[\hskip 
 \labelsep\emph{Proof sketch.}]\ignorespaces}
\def\@xproofsketch[#1]{\trivlist \item[\hskip 
 \labelsep\emph{Proof sketch #1.}]\ignorespaces}
\makeatother

\newcommand{\andalso}         {\quad\quad}


\newcommand{\freeze}[1]     {\widecheck{#1}}
\newcommand{\unfreeze}[1]   {\widehat{#1}}
\newcommand{\dunfrozen}[1]  {#1}

\newcommand{\frozen}[1]     {\check{#1}}
\newcommand{\frozenornot}[1]{\bar{#1}}


\definecolor{applegreen}{rgb}{0.55, 0.71, 0.0}
 \definecolor{airforceblue}{rgb}{0.36, 0.54, 0.66}
\definecolor{bittersweet}{rgb}{1.0, 0.44, 0.37}

\newcommand{\hlcc}[1]{%
  \colorbox{bittersweet!25}{$\displaystyle#1$}}




\usepackage{scalerel,stackengine}
\stackMath
\newcommand\reallywidehat[1]{%
\savestack{\tmpbox}{\stretchto{%
  \scaleto{%
    \scalerel*[\widthof{\ensuremath{#1}}]{\kern.1pt\mathchar"0362\kern.1pt}%
    {\rule{0ex}{\textheight}}
  }{\textheight}%
}{2.4ex}}%
\stackon[-6.9pt]{#1}{\tmpbox}%
}

\title{Assumption-Commitment Types for Resource Management
in Virtually Timed Ambients}
\author{Einar Broch Johnsen, Martin Steffen, Johanna Beate Stumpf }
\institute{University of Oslo, Oslo, Norway\\\textsf{\{einarj,msteffen,johanbst\}@ifi.uio.no}}

\begin{document}

\pagestyle{plain}

\maketitle
\ifok 
\begin{abstract}
  This paper introduces a type system for resource management in the
  context of nested virtualization. With nested virtualization, virtual
  machines compete with other processes for the resources of their host
  environment in order to provision their own processes, which could again
  be virtual machines. The calculus of virtually timed ambients formalizes
  such resource provisioning, extending the capabilities of mobile ambients
  to model the dynamic creation, migration, and destruction of virtual
  machines. The proposed type system uses assumptions about the outside of
  a virtually timed ambient to guarantee resource provisioning on the
  inside. We prove subject reduction and progress for well-typed virtually
  timed ambients, expressing that upper bounds on resource needs are
  preserved by reduction and that processes do not run out of resources.
\end{abstract}


\section{Introduction}
\label{sec:tamb.intro}

Virtualization enables the resources of an execution environment to be
represented as a software layer, a so-called \emph{virtual
  machine}. Software processes are agnostic to whether they run on a
virtual machine or directly on physical hardware.  A virtual machine
is itself such a process, which can be executed on another virtual
machine.  Technologies such as VirtualBox, VMWare ESXi, Ravello HVX,
Microsoft Hyper-V, and the open-source Xen hypervisor increasingly
support running virtual machines inside each other in this way. This
\emph{nested virtualization}, originally introduced by Goldberg
\cite{goldberg74computer}, is necessary to host virtual machines with
operating systems which themselves support
virtualization~\cite{ben-yehuda10turtles}, such as Micro\-soft Windows
7 and Linux KVM.  Use cases for nested virtualization include end-user
virtualization for guests, software development, and deployment
testing.  Nested virtualization is also a crucial technology to
support the hybrid cloud, as it enables virtual machines to migrate
between different cloud providers~\cite{williams12xenblanket}.

To study the logical behavior of virtual machines in the context of
nested virtualization, this paper introduces a type-based analysis for
a calculus of virtual machines. An essential feature of virtual
machines, captured by this calculus, is that a virtual
machine competes with other processes for the resources available in
their execution environment, in order to provision resources to the
processes \emph{inside} the virtual machine.  Another essential
feature of virtual machines is \emph{migration}. From an abstract
perspective, virtual machines can be seen as mobile processes which
can move between positions in a hierarchy of nested locations.

We develop our type system for \emph{virtually timed ambients}
\cite{johnsen.steffen.stumpf:calculusvta-wadt}, a calculus of mobile
virtual locations with explicit resource provisioning, based on mobile
ambients \cite{cardelli.gordon:ambients}.  Our goal is to statically
approximate an upper bound on resource consumption for systems of
virtual machines expressed in this calculus.  The calculus features a
resource called \emph{virtual time}, reflecting local execution
capacity, which is provisioned to an ambient by its parent ambient,
similar to time slices that an operating system provisions to its
processes.  
With several levels of nested virtualization,
virtual time becomes a \emph{local} notion 
which depends on an
ambient's position in the location hierarchy.  Virtually timed
ambients are mobile, reflecting that virtual machines may migrate
between host virtual machines.  Migration affects the execution speed
of processes inside the virtually timed ambient which is moving as
well as in its host before and after the move.  Consequently, the
resources required by a process change dynamically when the topology
changes.

The distinction between the inside and outside of a virtually timed
ambient (or a virtual machine) is a challenge for compositional
analysis; we have knowledge of the current contents of the virtual
machine, but not of what can happen outside its borders. This
challenge is addressed in our type system by distinguishing
\emph{assumptions} about ambients on the outside of the virtually
timed ambient from \emph{commitments} to ambients on the inside.  To
statically approximate the effects of migration, an ambient's type
imposes a bound on the ambients it can host. Type checking fails if
the ability to provision resources for an incoming ambient in a timely
way cannot be statically guaranteed.

The ambient calculus has previously been enriched with types 
(e.g., \cite{G03}).  
Exploiting the explicit notion of resource provisioning in virtually
timed ambients (including a fair scheduling strategy and competition
for resources between processes), our type system captures the
resource capacity of a virtually timed ambient and an upper bound on
the number of its subambients. The type system thereby provides
concrete results on resource consumption in an operational framework.
Resource dependency in the type system is expressed using
coeffects. The term \emph{coeffect} was coined by Petricek, Orchard,
and Mycroft \cite{POM14,POM13} to capture how a computation
depends on an environment rather than how it affects the
environment. In our setting, coeffects capture how a process depends
on its environment by an upper bound on the resources needed
by the process.

\smallskip
\noindent \emph{Contributions.}
The main technical contributions of this paper are
\begin{itemize}
\item an \emph{assumption commitment type system} with \emph{effects}
  and \emph{coeffects}, which provides a static approximation of
  constraints regarding the capacity of virtually timed ambients and
  an upper bound on their resource usage; and
\item a proof of the \emph{soundness of resource management} for
  well-typed virtually timed ambients in terms of a \emph{subject
    reduction} theorem which expresses that the upper bounds on resources
  and on the number of subambients are preserved under reduction, and a
  \emph{progress} theorem which expresses that  well-typed  virtually timed ambients will
  not run out of resources.
\end{itemize}
To the best of our knowledge, this is the first  assumption commitment
style type system for resource types and nested locations. 

\smallskip
\noindent \emph{Paper overview.}  Section~\ref{sec:coeff.vta} introduces virtually timed
ambients. Section~\ref{sec:coeff.typesystem.simple} presents the 
type system for resource management.
In Section~\ref{sec:coeff.subred}, we prove the soundness of the type
system in terms of subject reduction and progress. We discuss related
work and conclude in Sections~\ref{sec:coeff.related} and
\ref{sec:coeff.conclusion}.


\fi
\section{Virtually Timed Ambients}

\label{sec:coeff.vta}

\ifok
Mobile ambients \cite{cardelli.gordon:ambients} are processes with a
concept of location, arranged in a hierarchy which may change
dynamically. Interpreting these locations as places of deployment,
\emph{virtually timed ambients}
\cite{johnsen.steffen.stumpf:calculusvta-wadt,johnsen.steffen.stumpf:calculusvta-jlamp}
extend mobile ambients with notions of virtual time and resource
consumption.  The timed behavior of a process depends on the one hand on
the \emph{local} timed behavior, and on the other hand on the placement or
deployment of the process in the hierarchical ambient structure.  Virtually
timed ambients combine timed processes and timed capabilities with the
mobility and location properties of the mobile ambient calculus.

Compared to the previous work
\cite{johnsen.steffen.stumpf:calculusvta-wadt,johnsen.steffen.stumpf:calculusvta-jlamp},
we here present a slightly simplified version of virtually timed
ambients which assumes a uniform speed for all ambients in the
hierarchy.
This simplification does not mean the ambients proceed uniformly
with respect to time: the progress of an ambient still depends on
its position in the hierarchy and the number of sibling ambients that
compete for time slices at the given level.  Since an ambient system can
change its structure, i.e., its hierarchy, an ambient's local access to
time slices may also dynamically change.
Thus, the simplification by uniform speed is not conceptual, but it
allows a simpler formulation of the type system by removing fractional
representations of speed in scheduling and the resulting (easy but
cumbersome) calculations.


\begin{definition}[Virtually timed ambients]
  \label{definition:tamb.timedambients}%
  \index{virtually timed ambient}
  The syntax of virtually timed ambients is as follows:
 {\em    \begin{displaymath}
    \begin{array}[t]{rcll}
  P & \bnfdef & \0 \bnfbar \nubind{n}{P} \bnfbar  P\!\parallel\!P\  \bnfbar\  \repls C.P  \ \bnfbar  C.P
  \bnfbar
  \tamb{n}{P}
  \\
  C & \bnfdef &  \tacin{n} \bnfbar \tacout{n} \bnfbar \tacopen{n} \bnfbar \tcreseat
\end{array}

    \end{displaymath}
}
\end{definition}

The syntax is almost unchanged from that of standard mobile ambients
(e.g.,~\cite{cardelli.gordon:ambients}), the only syntactic addition is an
additional capability $\tcreseat$ explained below.  In the sequel, we
mostly omit the qualification ``timed'' or ``virtually timed'' when
speaking about processes, capabilities, etc. Processes include the inactive
process $\0$, parallel composition $P\!\parallel\!P $ and replication
$\repls C.P$, the latter conceptually represents an unbounded parallel
composition of a process, with capability $C$ as ``guard''.  The
$\nu$-binder or restriction operator, makes the name $n$ local, as in the
$\pi$-calculus, ambient calculus and related formalisms.  \emph{Ambients}
$\tamb{n}{P}$ are named processes. The standard mobile ambient
\emph{capabilities} $\tacins$, $\tacouts$, and $\tacopens$ allow a process
to change the nested ambient structure by moving an ambient into or out of
another ambient, or by dissolving an ambient altogether.

The additional capability $\tcreseat$ is specific for the virtually timed
extension and abstractly represents the need of the process for a
\emph{resource} in order to continue its execution (i.e., $\tcreseat$ can
be read as ``consume'').  Thus, the consume capability relates to
computation cost in frameworks for cost analysis (e.g.,
\cite{albert07esop,albert18tocl}).  In our setting, the
$\tcreseat$-capabilities consume resources which can be thought of time
slices and which are governed by a scheduler. A scheduler is \emph{local}
to an ambient and its responsibility is to fairly schedule the processes
that are directly contained in the ambient it is managing. Since ambients
are nested, the scheduler also has to allocate time slices or resources to
subambients, thereby delegating the fair allocation of time slices at the
level of the subambients to their respective schedulers. To achieve a fair
schedule, the semantics adopts a simple round-based strategy. In first
approximation: no process is served twice, unless all other processes at
that level have been served at least once.  This \emph{round-based scheme}
is slightly more refined in that the number of processes per ambient is not
fixed as ambients may move inside the hierarchy and even dissolve.

To capture the outlined scheduling strategy in operational rules
working on the syntax of ambients, we \emph{augment} the grammar of
Def.~\ref{definition:tamb.timedambients} with additional
\emph{run-time} syntax (highlighted below).  When needed, we refer to
the original syntax from Def.~\ref{definition:tamb.timedambients} as
\emph{static} syntax.  The run-time syntax uses the notation
$\frozen{\_}$ to indicate that processes, including ambients, are
\emph{frozen} and $\meta{n}$ to denote either $n$ or $\frozen{n}$.
$$
  \begin{array}[t]{rcll}
    P & \bnfdef & \0\bnfbar \nubind{n}{P}  \parallel \ P \parallel  P\ \bnfbar\   \repls C.P\  \bnfbar\ \hlcc{\tickins }
    \bnfbar \hlcc{\tickouts}
    \bnfbar 
    \tamb{\meta{n}}{P} 
    \bnfbar
    C.P
    \\
    \meta{n} &\bnfdef &   \dunfrozen{n} \bnfbar \hlcc{\frozen{n}} 
    \\
    \tcreseatfon &\bnfdef & \dunfrozen{\tcreseat} \bnfbar \hlcc{\frozen{\tcreseat}}
    \\
    C & \bnfdef & 
    \tacin{n} \bnfbar \tacout{n} \bnfbar  \tacopen{n} \bnfbar \hlcc{\tickins} \bnfbar \tcreseatfon
  \end{array}

$$
Frozen processes are not eligible for scheduling.  For regular
(non-ambient) processes, only processes prefixed by the consume capability
$\tcreseat$ will be controlled in this way; other processes are
unconditionally enabled. Consequently, we only need as additional
run-time syntax $\frozen{\tcreseat}$, capturing a deactivated resource
capability. Similarly $\tamb{\frozen{n}}{P}$ denotes  a timed ambient
which is not eligible for scheduling.
Apart from scheduling, a frozen ambient $\tamb{\frozen{n}}{P}$ is
treated as any other ambient $\tamb{n}{P}$: the ordinary, untimed
capabilities address ambients by their name without the additional
scheduling annotation.  Likewise, $\nu$-binders and corresponding
renaming and algebraic equivalences treat names $\frozen{n}$ as
identical to $n$. Unless explicitly mentioned, we assume in the
following run-time syntax, i.e., $P$ may contain occurrences of
$\frozen{n}$ and $\frozen{\tcreseat}$. Time slices are denoted by
ticks, and come in two forms $\tickins$ and $\tickouts$. We may think
of the first form $\tickins$ as representing incoming ticks into an
ambient, typically from the parent ambient, the second form
$\tickouts$ represents time slices handed out to the local processes
by the local scheduler. The $\tickins$-capability similarly accepts an
incoming tick.  Let $\names(P)$ denote the set of names for ambients
contained in $P$.

\subsection{Semantics}
\label{sec:coeff.simple.semantics}

The semantics of virtually timed ambients is given as a reduction
system.  The rules for structural congruence $P \scong Q$ are
equivalent to those for mobile ambients (and therefore omitted here).
Besides structural congruence, the \emph{reduction} relation
$P \red{} Q$ for virtually timed ambients builds upon
\emph{observables,} also known as \emph{barbs}. Barbs, originally
introduced for the $\pi$-calculus
\cite{milner.sangiorgi:barbed-bisimulation}, capture a notion of
immediate observability. In the ambient calculus, these observations
concern the presence of a top-level ambient whose name is not
restricted \cite{merro.nardelli:behavioral}. In our context, the barbs
are adapted to express top-level schedulability, i.e., an ambient's
ability to receive a tick.  In addition, we will need to capture that
a sub-process is able to receive a tick from it's local scheduler. To
specify that, we denote by $\Cwithhole$ (or simply by $\C$) a
\emph{context}, i.e., a process with a (unique) hole $\Chole$ in place
of a process, and write $\Cambwith{P}$ for the context with its hole
replaced by $P$. The observability predicates (or ``tick-barbs'')
$\sbarbson{\tctickin}$ resp.\ $\sbarbsonwith{\tctickin}{\C}$ are then
defined as follows, where $\nmany{m}$ is a tuple of names:

\begin{definition}[Barbs]
  \label{definition:coeff.barbs}
  A process $P$ \emph{strongly barbs on $\tctickin$}, written
  $P\sbarbson{\tctickin}$, if $P\scong (\nu \nmany{m})
  (\tamb{n}{P_1} \parallel P_2)$ or $P\scong (\nu \nmany{m})
  (\tctickin.P_1 \parallel P_2)$.   A process $P$
  \emph{strongly barbs on} $\tctickin$ \emph{in context} $\C$, written
  $P\sbarbsonwith{\tctickin}{\C}$, if $P = \Cwith{P'}$ for some process $P'$
  with $P' \sbarbson{\tctickin}$.
\end{definition}

Note that the ambient name $n$ may well be hidden, i.e., mentioned in
$\nmany{m}$. Barbing on the ambient name $n$, written $P
\sbarbson{n}$, would require that $P\scong (\nu \nmany{m})
(\tamb{n}{P_1} \parallel P_2)$ where $n\notin \nmany{m}$, in
contrast to the definition of $P\sbarbson{\tctickin}$. This more
conventional notion of strong barbing \cite{merro.nardelli:behavioral}
expresses that an ambient is available
for interaction with the standard ambient capabilities; ambients whose name
is unknown are not available to be contacted by other ambients and
therefore, their name is excluded in the observability predicate
$\sbarbson{n}$. In contrast, strong barbing as defined in Def.~\ref{definition:coeff.barbs}
captures an ambient's ability to receive ticks and thus, the definition
will allow hidden ambients to be served by the
local scheduler. 
However, the name of the ambient must \emph{not} be frozen
$\frozen{n}$: ambients that have been served a tick in the current
round are not eligible for another allocation before a new round has
started, in which case the ambient's name has ``changed'' to $n$.

The reduction rules for virtually timed ambients are given in
Tables~\ref{tab:redsteps.simple.1} and \ref{tab:redsteps.simple.2}.  The
rules in Table~\ref{tab:redsteps.simple.1} (with rule names to the left)
cover ambient reconfiguration. Apart from the annotations used for
scheduling, the rules are exactly the ones from the (untimed) mobile
ambients \cite{cardelli.gordon:ambients}.

\begin{table}[t]
  \centering
  \begin{ruleset}
\begin{array}{l@{\quad}l}
  \rn{(R-In)} &
  \tamb{\meta{n}}{\tacin{m}.P_1\parallel P_2} \parallel \tamb{\meta{m}}{Q}
  \red{}
  \tamb{\meta{m}}{Q\ \parallel \tamb{\frozen{n}}{P_1\parallel P_2}}
\\
\rn{(R-Out)}&
  \tamb{\meta{m}}{\tamb{\meta{n}}{\tacout{m}.P_1\parallel P_2} \parallel Q}
  \red{}
   \tamb{\frozen{n}}{P_1\parallel P_2} \parallel \tamb{\meta{m}}{Q}
\\
\rn{(R-Open)} & 
\tacopen{n}.P_1 \parallel \tamb{\meta{n}}{P_2} 
  \red{}
  P_1 \parallel \freeze{P}_2
\end{array}  
  \end{ruleset}
  \caption{Reduction rules (1). \label{tab:redsteps.simple.1}}
\end{table}

Ambients can undergo restructuring in three different ways.
First, an ambient can move horizontally or laterally by entering a
sibling ambient (rule \rn{R-In}).
Second, it can move vertically up the tree, leaving its parent ambient
(rule \rn{R-Out}).
%
Finally, a process can cause the dissolution of its
surrounding ambient (rule \rn{R-Open}).
These forms of restructuring are \emph{timeless}
in that they incur no computation costs.
If an ambient changes its place, the scheduler of the target ambient
will from that point on, become responsible for the new ambient, and
the treatment is simple: if frozen, the newcomer will not be served in the
current round of the scheduler,
but waits for the next round.
Considering the source ambient (i.e., the ambient which contained the
process executing the $\tacouts$ or $\tacins$ capability), no
process inside the source ambient looses or changes its status.  A
similar discipline is followed when opening an ambient in rule
\rn{R-Open}.
Note that a process in an ambient can execute a capability $\tacins$,
$\tacouts$, or $\tacopens$ independent of the status of the affected
ambient, which is indicated in the rules by $\meta{n}$ and $\meta{m}$.

\ifprivate
\begin{remark}[Newcomers have to wait]
  This principle that newcomers have to line up at the end of the queue in
  the target ambient is consistent with the old rules. The condition for
  the source agent, that not even the continuation looses its time slice in
  the moving source agent deviates from the older rules.
\end{remark}
\fi

To realize the round-based scheduling, processes conceptually switch
back and forth between waiting to be served in the current round, and
having been served and thus waiting for the next round to begin. The
following definition of $\freeze{P}$ is used to mark a process $P$ as
served:

\begin{definition}[Freezing and unfreezing]
  \index{$\freeze{P}$}%
  \label{definition:vta.simple.freeze}
  Let $\freeze{P}$ denote the process where all top-level occurrences of
  $\tamb{n}{Q}$ are replaced by $\tamb{\frozen{n}}{Q}$ and all top-level
  occurrences of $\tcreseat$ replaced by $\frozen{\tcreseat}$. Conversely,
  let $\unfreeze{P}$ denote the process where all top-level occurrences
  of $\frozen{\tcreseat}$ are replaced by $\tcreseat$ and all top-level
  occurrences of $\tamb{\frozen{n}}{Q}$ replaced by $\tamb{n}{Q}$.
Define $\freeze{P}$
  by induction on the syntactic structure as follows:
$$
    \begin{array}[t]{rcl@{\quad}l@{\quad\quad}rcl}
  \freeze{\nubind{n}{P}} & = & \nubind{n}{\freeze{P}}  & & \freeze{\frozen{\tcreseat}} &  = & \frozen{\tcreseat}
  \\
  \freeze{P_1\parallel P_2} & = &  \freeze{P_1}\ \parallel \freeze{P_2} & &  \freeze{\tcreseat}&  = & \frozen{\tcreseat}
  \\
  \freeze{\tamb{n}{P}} & = & \tamb{\frozen{n}}{P}  &  &  \freeze{n}   & =  & \frozen{n}
  \\
  \freeze{\tcreseatfon.P} & = &  \frozen{\tcreseatfon}.P & & \freeze{\frozen{n}}   & =  & \frozen{n}
  \\
  \freeze{C.P} & = &  C.\freeze{P} &  \text{$C \not= \tcreseatfon$}
  \\
  \freeze{P}  & = & P & \text{otherwise}\\[-12pt]
\end{array}%
$$
  The definition of $\unfreeze{P}$ is analogous (e.g.,
  $\unfreeze{\frozen{\tcreseat}} = \tcreseat$) and omitted here.
\end{definition}

Remark that the congruence relation, which is part of the
reduction semantics, works with scheduling in the sense that both
operations defined in Def.~\ref{definition:vta.simple.freeze} are
preserved under congruence: $P_1 \scong P_2$ implies
$\freeze{P_1} \scong \freeze{P_2}$ and
$\unfreeze{P_1} \scong \unfreeze{P_2}$ .

\begin{table}[t]
  \centering
  \begin{ruleset}
    \begin{array}[t]{l@{\qquad\qquad}l}
\begin{array}[t]{l@{\qquad\qquad}l}
  \tickins \red{} \tickouts
  &
  \tickouts \parallel \tickins.P \red{} \freeze{P}
  \\
  \dunfrozen{\tcreseat}.P \red{} \tickins.P
  &
  \tickouts \parallel \tamb{\dunfrozen{n}}{P}\red{}
  \tamb{\frozen{n}}{\tickins \parallel P}
\end{array}
&
\begin{array}[t]{l}
 \text{not}(P\sbarbson{\tctickin})
\\\hline\\[-12pt]  
\tamb{n}{P} \red{} \tamb{n}{\unfreeze{P}}
\end{array}
\end{array}

  
  \end{ruleset}
  \caption{Reduction rules (2)\label{tab:redsteps.simple.2}}
\end{table}
 
Scheduling is covered by the reduction rules in
Table~\ref{tab:redsteps.simple.2}, which details the handling of ticks
and the resource capabilities.  The first rule translates ``incoming''
ticks to ticks available for local processes.
The translation ratio is uniform; i.e., one incoming tick produces one
outgoing tick (this is the simplification compared to previous work
mentioned earlier, where the ratio between incoming and local ticks
could more generally be a rational number).
A $\tickouts$ process can be consumed in two ways. First by scheduling
a $\tcreseat$-prefixed process which undergoes the steps
$\dunfrozen{\tcreseat}.P \red{} \tctickin.P \red{} \freeze{P}$
(consuming $\tickouts$ in the second step). Second, by scheduling a
subambient, such that an incoming tick $\tickins$ occurs one level
down in the hierarchy.
To ensure the round-based scheduling, the scheduled entity must
not have been served yet in the current round. For this purpose, the
process before the transition must be of the form
$\dunfrozen{\tctickin}.P$ or $\tamb{\dunfrozen{n}}{P}$, and after
the transition the continuation of the process is frozen,
using Def.~\ref{definition:vta.simple.freeze}. 
The last rule completes one scheduling round and initiates the next
round by changing the ambient's processes $P$ to $\unfreeze{P}$.  This
unfreezing step can be done only if all the ambient's processes have
been served, which is captured be the negative premise stipulating
that no process at the level of $n$ can proceed: at the given level,
the processes are blocked, but that does not mean, that in
subambients, all processes must be blocked as well.


\begin{example}
  \label{ex1}
  Consider the process $\tickouts \parallel
  \tamb{\textit{cloud\,}}{\0} \parallel \tamb{vm}{\tcin
    \textit{cloud\,}.\tcreseat. \0}$.  Three reduction steps are possible,
  as $\tickouts$ can propagate to either ambients and ambient $vm$
  can move into $\textit{cloud}$.  One way this process can reduce, is as
  follows:
  
  $$\begin{array}{l}
    \tickouts \parallel \tamb{\textit{cloud\,}}{\0}   \parallel  \tamb{vm}{\tcin \textit{cloud\,}.\tcreseat. \0}\\
\quad\red{} \tickouts \parallel  \tamb{\textit{cloud\,}}{\0 \parallel
    \tamb{\frozen{vm}}{\tcreseat. \0}}
\red{}  \tamb{\frozen{\textit{cloud\,}}}{\tickins \parallel \0 \parallel
    \tamb{\frozen{vm}}{\tcreseat. \0}}\\
     \quad\red{}  \tamb{\frozen{\textit{cloud\,}}}{\tickouts \parallel \0 \parallel
    \tamb{vm}{\tcreseat. \0}}
\red{}  \tamb{\frozen{\textit{cloud\,}}}{\0 \parallel
    \tamb{\frozen{vm}}{\tickins \parallel \tcreseat. \0}}\\
    \quad\red{}  \tamb{\frozen{\textit{cloud\,}}}{\0 \parallel
    \tamb{\frozen{vm}}{\tickouts \parallel \tcreseat. \0}}
\red{}  \tamb{\frozen{\textit{cloud\,}}}{\0 \parallel
    \tamb{\frozen{vm}}{\0}}\\
    \end{array}
$$
However, the time slice could  also enter the ambient \textit{vm}, and
move with this ambient, resulting in a reduction sequence
starting as follows:
$$\begin{array}{l}
    \tickouts \parallel \tamb{\textit{cloud\,}}{\0}   \parallel
    \tamb{vm}{\tcin \textit{cloud\,}.\tcreseat. \0}\\
    \red{}  \tamb{\textit{cloud\,}}{\0}
    \parallel  \tamb{\frozen{vm}}{\tickins \parallel \tcin \textit{cloud\,}.\tcreseat. \0}
\red{}  \tamb{\textit{cloud\,}}{\0 \parallel
    \tamb{\frozen{vm}}{\tickins \parallel \tcreseat. \0}} \red \ldots
\end{array}
$$

\end{example}


\section{An Assumption-Commitment Type System}
\label{sec:coeff.typesystem.simple}
\index{type system}

We consider a type system which analyzes the timed behavior of
virtually timed ambients in terms of the movement and resource
consumption of a given process. Statically estimating the timed
behavior is complicated because the placement of an ambient in the
process hierarchy influences its resource consumption, and movements
inside the hierarchy changes the relative speed of the ambients.  The
proposed type system is loosely based on Cardelli, Ghelli, and
Gordon's movement control types for mobile ambients \cite{CGD01};
however, its purpose is quite different, and therefore the technical
formulation will be rather different as well.

\paragraph{Types, contexts, and judgments.}
The typing of processes happens with respect to nominal resource
contracts for virtually timed ambients.  Contracts $\Tambs$ for
ambients are tuples of the form
\begin{equation*}
  \label{eq:coeff.types.ambient}
  \Tambs = \Tamb{\Tresj}{\Tcap}{\Tts}.
\end{equation*}
Here, 
$\Tresj \in \mathbb{N}$ specifies the ambient's \emph{resource
  capacity}, i.e., the upper bound on the number of resources that the
subprocesses of the ambient are allowed to require;
$\Tcap \in \mathbb{N}$ specifies the ambient's \emph{hosting
  capacity}, i.e., the upper bound on the number of timed subambients
and timed processes allowed inside this ambient; and
$\Tts \in \mathbb{N}$ specifies the ambient's \emph{currently hosted
  processes}, i.e., the number of taken slots within the ambient's hosting
capacity.
The number of currently hosted processes inside an ambient
can change dynamically, due to the movements of ambients. These
changes must be captured in the type system. In this sense, a type for
ambient names $\Tambs$ contains an accumulated effect mapping.

Typing \emph{environments} or contexts associate ambient names with
resource contracts. They are finite lists of associations of the form
$n:\Tambs$. In the type system, when analyzing an ambient or process,
a typing environment will play a role as an \emph{assumption},
expressing requirements about the ambients \emph{outside} the current
process. Dually, facts about ambients which are part of the current
process are captured in another typing environment which plays the
role of a \emph{commitment}. Notationally, we use $\Ea$ for assumption
and $\Ec$ for commitment environments.  We write $\Eempty$ for the
empty environment, and $\Ea, n: \Tambs$ for the extension of $\Ea$ by
a new binding $n :\Tambs$. We assume that ambient names $n$ are unique
in environments, so $n$ is not already bound in $\Ea$. Conversely,
$\Ea \ewithout n: \Tambs$ represents an environment coinciding with
$\Ea$ except that the binding for $n$ is removed. If $n$ is not
declared in $\Ea$, the removal has no effect.  The typing judgement
for names is given as $\Ea \p n : T$.  Since each name occurs at most
once, an environment $\Ea$ can be seen as a finite mapping; we use
$\Ea(n)$ to denote the ambient type associated with $n$ in $\Ea$ and
write $\dom(\Ea)$ for all names bound in~$\Ea$.  In the typing rules,
the typing environment $\Ea$ may need to capture the ambient in which
the current process resides; this ambient will conventionally be
denoted by the \emph{reserved name} $\Tcur$.

Typing judgements for \emph{processes} $P$ are of the form
$$  \label{eq:coeff.types.judgment.process}
  \judgeold{\Ea}{\Ec}{P}{\Tres}{\ok \effect{\Tprov}{\Ttla}} 
$$
where $\Tres$ and $\Tprov$ are the required and provided resources of
a process $P$, $\Ttla$ is the number of subambients of $P$, and $\Ea$
and $\Ec$ are the assumptions and commitments of $P$, respectively.
We call $\Tres$ the \emph{coeffect} of the
process. Coeffects~\cite{POM13,POM14} capture how a computation
depends on an environment rather than how it affects this
environment. We use the perspective of coeffects since a computation
may require resources from its environment to terminate. Similarly,
$\Tprov$ is the number of provided resources in $P$; these resources
are available in $P$ independent of its environment, and $\Ttla$
approximates the number of subambients in $P$. We may think of
$\effect{\Tprov}{\Ttla}$ as the \emph{effect} of the type judgment, where
effects express what the process $P$ potentially provides to its
environment.

Since ambient names are assumed to be unique, it follows for type
judgments that $\dom(\Ec) \intersect \dom(\Ea) = \emptyset$, as an
ambient is either inside the process and has its contract in the commitments,
or outside and has its contract in the assumptions. Further,
$\dom(\Ec) \subseteq\names(P)$.

\begin{definition}[Domain equivalence]
  \index{domain equivalence} \index{$\Eequiv$} Two contexts
  $\Ea_1$ and $\Ea_2$ are \emph{domain equivalent}, denoted $\Ea_1 \Eequiv
  \Ea_2$, iff $\dom(\Ea_1)= \dom(\Ea_2)$.
\end{definition}

For each process, the domain of the assumptions is assumed to
contain all names which are not in the domain of the commitments;
i.e., for two parallel processes $P_1$ and $P_2$ such that
$ \judgeold{\Ea_1}{\Ec_1}{P_1}{\Tres_1}{\ok\effect{\Tprov_1}{\Ttla_1}}$ and
$ \judgeold{\Ea_2}{\Ec_2}{P_2}{\Tres_2}{\ok\effect{\Tprov_2}{\Ttla_2}}$, 
we will have that $\Ec_2\subseteq\Ea_1$, $\Ec_1\subseteq\Ea_2$
and $\dom(\Ec_1) \intersect \dom(\Ec_2) = \emptyset$.

\begin{definition}[Additivity of contexts]\label{def_add}
  Let $\Ea_1$ and $\Ea_2$ be contexts such that $\Ea_1 \Eequiv \Ea_2$, and
  $\Ea_i(n) = \Tamb{\Tresj}{\Tcap}{\Tts_i}$ for $n\in dom (\Ea_1)$ and $i =
  1, 2$.
  The context $\Ea_1 \Eplus \Ea_2$ with domain $n\in dom (\Ea_1)$ is
  defined as follows:
  \begin{displaymath}
    (\Ea_1 \Eplus \Ea_2) (n) =
    \Tamb{\Tresj}{\Tcap}{\Tts_1 +\Tts_2}.
\end{displaymath}
\end{definition}

If the number of currently hosted ambients is smaller than the hosting capacity
of all ambients in an environment,
we say that the environment is error-free:

\begin{definition}[Error-free environments]
  An environment $\E$ is \emph{error-free}, denoted \emph{$\p \Ea :
    \ok$} if $\Tts\leq \Tcap$ for all $n\in dom(\Ea)$ and
  $\Ea(n) = \Tamb{\Tresj}{\Tcap}{\Tts}$.
\end{definition}

Resource contracts can be ordered by their contents and environments
by their resource contracts. The bottom type $\bot$ is a subtype of
all resource contracts.

\begin{definition}[Ordering of resource contracts and environments]
  \label{definition:coeff.ordering.uniform}
  \index{$\Tleq$ (order on resource contracts)} Let
$\Tambs_1 = \Tamb{\Tresj_1}{\Tcap_1}{\Tts_1}$ and $\Tambs_2 =
\Tamb{\Tresj_2}{\Tcap_2}{\Tts_2}$ be resource contracts.
Then $T_1$ is a \emph{subtype} of $T_2$, written $T_1 \Tleq T_2$, if
and only if $\Tresj_1 \leq \Tresj_2$, $\Tcap_1 \leq \Tcap_2$ and
$\Tts_1 \geq \Tts_2$.  Typing environments are \emph{ordered} by the
subtype relation as follows: For type environments $\Ea_1$ and
$\Ea_2$, $\Ea_1 \Eleq \Ea_2$ if $\dom(\Ea_1) \subseteq \dom(\Ea_2)$
and $\Ea_1(n)\ \Tleq\ \Ea_2(n)$, for all $n \in \dom(\Ea_1)$.
\end{definition}

Scheduling is reflected in the type rules by the
calculation of the coeffect $\Tres$. The coeffect captures the number of
resources a process needs to terminate.



In Table~\ref{tab:type.rules}, Rule~\rn{T-Zero} captures the
inactive process, which does not require nor provide any time slices.
Rule~\rn{T-Tick1} expresses the availability of $\tickouts$ and
Rule~\rn{T-Tick2} that a time slice $\tickins$ is ready to be
consumed. Both judgements express that a time slice is provided
without requiring any time slice.
The assumption rule \rn{T-Ass} types an ambient with the resource contract
it has in the environment.  The restriction rule \rn{T-Res} removes the
resource contract assumption in the environment for the restricted name.
Subsumption relates different resource contracts,
; e.g.,
in subtypes (\rn{T-Tsub}), the subsumption rule \rn{T-Sub} allows a higher
number of required resources, a lower number of provided resources and a
higher number of subambients to be assumed in a process.

For the typing of ambients in Rule~\rn{T-Amb}, the number of resources
a process $P$ requires changes if it becomes enclosed in an ambient
$n$; i.e., we move to the resource contract $T$ of $n$, provided the
process $P$ satisfies its part of the contract.
The contract here becomes a
commitment whereas the required resources in the co-effect may be
smaller than the $\Tcap$ of the contract because $n$ may already have
received the time slices $\Tprov$.

%
The parallel composition rule \rn{T-Par} makes use of the fairness of
the scheduling of time slices in virtually timed ambients. While the
branches agree on the required resources $\Tres$, the provided
resources and subambients accumulate.
It follows from \rn{T-Par} that several ambients in parallel will at
most need as many resources $\Tres$ from the parent ambient as the
slowest of them. 
Furthermore, \rn{T-Par} changes assumptions and commitments depending on
the assumptions and the commitments of the composed processes, using the
context composition operator from Def.~\ref{def_add} to compose
environments.  We have $\dom(\Ec_P) \intersect \dom(\Ec_Q) = \emptyset$,
which is a consequence of the uniqueness of ambient names.  The assumptions
of the branches split the resource contracts of the environment $\E$
between the type judgements for $P_1$ and $P_2$ and the commitments split
such that $\Delta_1'$ is the assumption for $P_1$ and vice versa.
%
For replication, the corresponding rule \rn{T-Rep} imposes the restriction,
that the process being replicated does not incur any cost; allowing that
would amount to an unbounded resource need.

\begin{table}[!t]
\begin{ruleset}
\begin{array}{c}
\textsc{\footnotesize (T-Zero)}\\\hline
\judgeold{\Eempty}{\Eempty}{\0}{0}{\ok \effect{0}{0}}
\end{array}

\quad

\begin{array}{c}
\textsc{\footnotesize (T-Tick1)}\\\hline
\judgeold{\Eempty}{\Eempty}{\tickins}{0}{\ok
    \effect{1}{0}}
\end{array}

\quad

\begin{array}{c}
\textsc{\footnotesize (T-Tick2)}\\\hline
\judgeold{\Eempty}{\Eempty}{\tickouts}{0}{\ok
    \effect{1}{0}}
\end{array}

\ruleskip

\begin{array}{c}
\textsc{\footnotesize (T-Ass)}\\
  \E(n)=  \Tambs
\\\hline
  \Ea \p  n  : \Tambs
\end{array}

\qquad

\begin{array}{c}
\textsc{\footnotesize (T-Res)}\\
  \judgeold{\Ea,k: \Tambs}{\Ec}{P}{\Tres}{\ok\effect{\Tprov}{\Ttla}}
\\\hline
  \judgeold{\Ea}{\Ec}{(\nu  k :\Tambs) P}{\Tres}{\ok\effect{\Tprov}{\Ttla}}
\end{array}

\qquad

\begin{array}{c}
\textsc{\footnotesize (T-Tsub)}\\
  \Ea \p  n  : \Tambs_1 \andalso
 \Tambs_1 \leq \Tambs_2
\\\hline
  \Ea \p  n  : \Tambs_2
\end{array}

\qquad


\ruleskip

\begin{array}{c}
\textsc{\footnotesize{ (T-Amb)}}\\
  \Gamma\vdash n: T \andalso 
  T=\langle\Tresj,\Tcap,\Tts \rangle 
  \\
  \Ttla \leq \Tcap
  \andalso 
  \Tres \times \Tcap\leq \Tresj+\Tprov
  \\
  \judgeold{\Ea,\Tcur:T}{\Ec}{P}{\Tres}{\ok
  \effect{\Tprov}{\Ttla}}
  \\\hline
  \judgeold{\Ea}{n\of T,\Ec}{\tamb{\frozenornot{n}}{P}}{\Tresj}{\ok\effect{0}{\Tcap+ 1}}  
\end{array}

\qquad

\begin{array}{c}
\textsc{\footnotesize (T-Sub)}\\
  \Ttla \geq \Ttla'
\\
    \Tres \geq \Tres'
  \andalso
\Tprov'\geq  \Tprov
\\
  \judgeold{\Ea}{\Ec}{P}{\Tres'}{\ok\effect{\Tprov'}{\Ttla '}}
\\\hline
  \judgeold{\Ea}{\Ec}{P}{\Tres}{\ok\effect{\Tprov}{\Ttla}}
\end{array}

\ruleskip

\begin{array}{c}
\textsc{\footnotesize{ (T-Par)}}\\
  \Ec_1 \Eequiv \Ec_1' 
  \andalso 
  \Ec_2 \Eequiv \Ec_2' 
  \andalso
  \p \Ea : \ok
  \andalso 
  \p \Ec : \ok
  \\
  \Ea = \Ea_1 \Eplus \Ea_2
  \andalso\quad
  \Ea_1 \Eequiv \Ea_2
  \andalso
  \judgeold{\Ea_1,\Ec_2'}{\Ec_1}{P_1}{\Tres}{\ok\effect{\Tprov_1}{\Ttla_1}}
\\
  \Ec = (\Ec_1 \Eplus \Ec_1') ,(\Ec_2 \Eplus \Ec_2')
  \andalso 
  \judgeold{\Ea_2,\Ec_1'}{\Ec_2}{P_2}{\Tres}{\ok\effect{\Tprov_2}{\Ttla_2}}
\\\hline
 \judgeold{\Ea}{\Ec}{P_1\parallel P_2}{\Tres}{\ok\effect{\Tprov_1+\Tprov_2}{\Ttla_1 + \Ttla_2}}
\end{array}

 \ruleskip



\begin{array}{c}
  \textsc{\footnotesize (T-Consume1)}\\
\Ttla'=\max\{\Ttla,1\}\\
  \Ea; \Tres\vdash P : \ok \effect{\Tprov}{\Ttla}, \Ec
\\\hline
\Ea; \Tres+1 \vdash \tcreseat.P : \ok \effect{\Tprov}{\Ttla'}, \Ec
\end{array}

\qquad

\begin{array}{c}
  \textsc{\footnotesize (T-Consume2)}\\
\Ttla'=\max\{\Ttla,1\}\\
  \Ea; \Tres\vdash P : \ok \effect{\Tprov}{\Ttla}, \Ec
\\\hline
\Ea; \Tres+1 \vdash \tickins.P : \ok \effect{\Tprov}{\Ttla'}, \Ec
\end{array}

\ruleskip

\begin{array}{c}
\textsc{\footnotesize (T-In)} \\
   T=\langle \Tresj, \Tcap ,\Tts \rangle 
 \andalso
  T'=\langle \Tresj, \Tcap ,\Tts+\Tcap\,'+1 \rangle\\
  \E,m\of T; \Tres\vdash P: \ok \effect{\Tprov}{\Ttla}, \Delta
\andalso
 \Tcap\times \Tres\leq \Tresj
\\ 
  \Gamma\vdash \Tcur : \langle \Tresj',\Tcap\,' ,\Tts' \rangle
\andalso
  \Tts+\Tcap\,'+1 \leq  \Tcap
  \\\hline
\judgeold{\E,m\of T'}{\Ec}{\tcin m.P}{\Tres}{\ok \effect{\Tprov}{\Ttla}}
\end{array}

\quad

 \begin{array}{c}
\quad\\
\textsc{\footnotesize (T-Rep)}\\
   \Ea; 0 \p P \OF 0, \Ec_P
   \\ 
   C \in  \{\tcin n, \tcout n, \tcopen n\}
\\\hline
     \Ea; 0  \p  ! C.P: 0, \Ec_P 
\end{array}

\ruleskip

\begin{array}{c}
\textsc{\footnotesize (T-Out)}\\
  \E;
  \Tres\vdash P: \ok \effect{\Tprov}{\Ttla}, \Delta   
\\\hline  
  \E; \Tres \vdash \tcout m.P  :\ok \effect{\Tprov}{\Ttla}, \Delta
\end{array}

\andalso
\begin{array}{c}
  \textsc{\footnotesize (T-Open)}\\
\Ea;\Tres\vdash P: \ok \effect{\Tprov}{\Ttla}, \Ec 
 \\\hline
\Ea; \Tres\vdash \tcopen m.P: \ok \effect{\Tprov}{\Ttla}, \Ec
 \end{array} 
\end{ruleset}
\caption[Type rules for the capability $\tcin m$.]{Type rules for the
  virtually timed ambients.\label{tab:type.rules} }
\end{table}


Now consider the capability rules.  In
\rn{T-Consume}, the resource consumption is a requirement to the
environment, expressed by increasing the coeffect to $\Tres+1$.  Since
the process requires a time slice, it is counted among the
currently hosted processes. If it was already counted as a timed
process, $\Ttla$ remains unchanged, but since it could have been
untimed, we let $\Ttla'=\max\{\Ttla,1\} $. 



Rule~\rn{T-In} derives an assumption about ambient $m$ under which the
movement $\tcin m.P$ can be typed. Since the movement involves all
processes co-located with $\tcin m.P$, the rule depends on the
resource contract of $\Tcur$, the ambient in which the current process
is located. The rule has a premise expressing that if $P$ can be typed
with a resource contract $T$ for $m$, then $\tcin m.P$ can be typed
with the resource contract $T'$ for $m$. In addition, the hosting
capacity $\Tcap'$ of $\Tcur$ and $\Tcur$ itself are added to the
assumed currently hosted processes $\Tts $ of the premise. The premise
$\Tcap\times \Tres\leq \Tresj$ expresses that the required resources
$\Tres$ must be within the resource capacity $\Tresj$ if scheduled to
all processes within the hosting capacity $\Tcap$ of $m$.  The effect
and co-effect carry over directly from the premise, as the movement
does not modify the required or provided resources or subambients of
$P$.
In contrast, rules~\rn{T-Open} and \rn{T-Out} simply preserve the
co-effect and effect of its premise, since the actual movement is
captured by the worst-case assumption in \rn{T-Amb}.


\begin{example}[Typing of in-capabilities]\label{ex2}
  We revisit Example~\ref{ex1} to illustrate the typing of
  $\tamb{\textit{cloud\,}}{\0} \mid \tamb{vm}{\tcin
    \textit{cloud\,}.\tcreseat. \0}$.  From \rn{T-Zero}
  and \rn{T-Consume}, we get
   $\emptyset;1\vdash \tcreseat. \0:\ok \effect{0}{1};\emptyset$.
   The $\tcin$-capability will move the ambient containing this process,
   which is captured by $\Tcur$ in the typing environment. Let us type $\Tcur$ by $T=\langle
   1,1,1\rangle$. In this case \textit{cloud} will need a hosting capacity if
   at least $2$, so let us type \textit{cloud} by $T'=\langle 2, 2, 2 \rangle$.
 Then, from 
   \rn{T-In}, we get
    $$\textit{cloud}:T',\Tcur:T;1\vdash \tcin \textit{cloud\,}.\tcreseat. \0:\ok
    \effect{0}{1};\emptyset.$$
    By \rn{T-Amb}, we get
    $\textit{cloud\,}:T';1\vdash  \tamb{vm}{\tcin~\textit{cloud\,}.\tcreseat. \0}:\ok \effect{0}{2};vm: T$.
    Similarly,
    $\emptyset;2\vdash \tamb{\textit{cloud\,}}{\0}: \ok \effect{0}{1};
    \textit{cloud}:\langle 2, 2, 0 \rangle$
and \rn{T-Par} gives us
    $$\emptyset;2\vdash \tamb{\textit{cloud\,}}{\0}
\parallel  \tamb{vm}{\tcin
      \textit{cloud\,}.\tcreseat. \0}
    : \ok \effect{0}{3}; vm: T, \textit{cloud}:T';$$
\end{example}

\begin{example}[Typing of open-capabilities]
  We consider the typing of a process
  $\tamb{\textit{cloud\,}}{\tacopen\; vm.\0 \parallel
    \tamb{vm}{\tcreseat.\0}}$. From \rn{T-Zero}
  and \rn{T-Consume}, we get
   $\emptyset;1\vdash \tcreseat. \0:\ok \effect{0}{1};\emptyset$. Let
   $vm$ have type $T=\langle 1,1,1\rangle$. Then, by \rn{T-Amb}, 
$$\emptyset;1\vdash \tamb{vm}{\tcreseat. \0}:\ok \effect{0}{2}; vm:T.$$
By \rn{T-Zero}, \rn{T-Open} and \rn{T Sub}, we have
 $\emptyset;1 \vdash \tacopen\; vm.\0 :\ok
 \effect{0}{0};\emptyset$. By \rn{T-Par}, we obtain
 $\emptyset;1\vdash \tacopen\; vm.\0 \parallel
    \tamb{vm}{\tcreseat.\0}: \ok \effect{0}{2}; vm:T$.
 Let \textit{cloud} have type $T'=\langle 2, 2, 2 \rangle$.
By \rn{T-Amb}, we get
 $$\emptyset;2\vdash \tamb{\textit{cloud\,}}{\tacopen\; vm.\0 \parallel
    \tamb{vm}{\tcreseat.\0}}: \ok \effect{0}{3}; vm:T, \textit{cloud}:
  T'.$$
\end{example}

\begin{example}[Typing of out-capabilities]
  We consider the typing of a process
$$  \tamb{\textit{cloud}}{\tamb{vm}{\tacout
    \textit{cloud}. \tcreseat.\0} \parallel \0}$$
By \rn{T-Zero} and \rn{T-Consume}, we have
$\emptyset; 1 \vdash\tcreseat.\0:\ok\effect{0}{1}; \emptyset$, and by
\rn{T-Out} we get
$$
\emptyset; 1 \vdash\tacout \textit{cloud}. \tcreseat.\0: \ok\effect{0}{1}; \emptyset
$$
Let $T=\langle 1,1,1\rangle$. We can type $vm$ by
$$
\emptyset; 1 \vdash\tamb{vm}{\tacout \textit{cloud}. \tcreseat.\0}:\ok\effect{0}{2}; vm:T
$$
and, with $T'= \langle 2,2,2\rangle$, we get
$$ \emptyset; 2 \vdash \tamb{\textit{cloud}}{\tamb{vm}{\tacout
    \textit{cloud}. \tcreseat.\0} \parallel \0}: \ok\effect{0}{3};
vm:T, \textit{cloud}:T'$$
\end{example}

\begin{example}[Failure of type checking] \label{vta:types:ex:fail}
  Type checking fails if the provisioning of resources for an incoming
  ambient in a timely way cannot be statically guaranteed. This can
  occur for different reasons.  One reason is that an ambient may
  lack \emph{sufficient hosting capacity}  to take in the processes that want to
  enter. Let $T'=\langle 2, 2, 2 \rangle$ as before and consider again the process
  $\tamb{\textit{cloud\,}}{ \0} \mid \tamb{vm}{ \tcin
    \textit{cloud\,}.\tcreseat. \0}$ from Example~\ref{ex2}.  Now
  assume a second virtual machine 
  $\tamb{vm_2}{ \tcin \textit{cloud\,}.\tcreseat. \0}$ which aims to
  enter the \textit{cloud} ambient, resulting in the parallel process
 $$\tamb{\textit{cloud\,}}{ \0} \mid \tamb{vm}{ \tcin
   \textit{cloud\,}.\tcreseat. \0} \parallel \tamb{vm_2}{ \tcin
   \textit{cloud\,}.\tcreseat. \0}$$ We can type $vm_2$ similarly to
 $vm$ in Example~\ref{ex2}.:
  $$\textit{cloud\,}:T';1\vdash  \tamb{vm_2}{\tcin
    \textit{cloud\,}.\tcreseat. \0}:\ok \effect{0}{2};vm_2: T.$$
 In contrast to Example~\ref{ex2}, the hosting capacity for
  \textit{cloud} in $T'$ cannot accommodate both $vm$ and $vm_2$;
  type checking fails when giving \textit{cloud} resource
  contract $T'$.
  
Another reason is that 
he resource contract of \textit{cloud} may have a \emph{too low
 resource capacity}. Consider a third virtual machine
  $\tamb{vm_3}{ \tcin\;
    \textit{cloud\,}.\tcreseat.\tcreseat.\tcreseat. \0}$ which can be
  typed with the resource contract  $ \langle 3,1,1\rangle$ for $vm_3$.
  Again, type checking fails if \textit{cloud} were given the resource
  contract $T'$, since the resource capacity of \textit{cloud} must
  here be
  at least $6$ with hosting capacity $2$.
\end{example}

\begin{example}[Capacity of an ambient] 
Assume that the process 
$$
\tamb{n_1}{ \tcin\; m.P_1} \parallel
  \tamb{n_2}{\tcin\; m.P_2} \parallel \tamb{m}{Q}
$$
is well-typed.
 Let
  $T_1= \Tamb{\Tresj}{\Tcap}{\Tts_1}{}$, 
  $T_2= \Tamb{\Tresj}{\Tcap}{\Tts_2}{}$ and
  $T_3= \Tamb{\Tresj}{\Tcap}{\Tts_3}{}$
be resource contracts such
  that
 $$m:T_i; \Tres_i \vdash  \tamb{n_i}{\tcin m.P_i}: \ok
 \effect{\Tprov_i}{\Ttla_i}; \Delta_i$$
 for $i\in\{1,2\}$, and $\emptyset;\Tres_3\vdash \tamb{m}{Q}:\ok
 \effect{\Tprov_3}{\Ttla_3}; m:T_3$. 
Let $r_{12}=\max(r_1,r_2)$ and 
 $T_{12}=\Tamb{\Tresj}{\Tcap}{\Tts_1\oplus \Tts_2}{}$. Since
 $\tamb{n_1}{ \tcin\; m.P_1} \parallel \tamb{n_2}{\tcin\; m.P_2} $ is
 well-typed, we have $\Tts_1\oplus \Tts_2\leq \Tcap$ and, by \rn{T-Par},
 $$m:T_{12}; \Tres_{12}
\vdash \tamb{n_1}{ \tcin\; m.P_1} \parallel
  \tamb{n_2}{\tcin\; m.P_2}:\ok \effect{\Tprov_{12}}{\Ttla_{12}}; \Ec_{12}
$$
where $\Tprov_{12}=\Tprov_1+\Tprov_2$, $\Ttla_{12}=\Ttla_1+\Ttla_2$
and $\Delta= \Delta_1,\Delta_2$.
By applying \rn{T-Par} again, we get
$$\emptyset; \Tres \vdash
\tamb{n_1}{ \tcin\; m.P_1} \parallel
  \tamb{n_2}{\tcin\; m.P_2} \parallel \tamb{m}{Q}:\ok \effect{\Tprov}{\Ttla}; m:T,\Delta
$$
where 
 $\Tres=\max\{\Tres_{12}, \Tres_{3}\}$, 
$\Tprov= \Tprov_{12}+\Tprov_{3}$,
$\Ttla= \Ttla_{12}+\Ttla_{3}$ and
$T= \Tamb{\Tresj}{\Tcap}{\Tts_{12}+\Tts_3}{}$.
Thus, the weakest resource contract which types $m$ and allows both
$n_1$ and $n_2$ to enter,  will have $\Tcap= \Tts_{12}+\Tts_3$ and
$\Tresj=\Tcap\times \Tres$.
\end{example}


\section{Soundness of Resource Management}
\label{sec:coeff.subred}
The soundness of resource management can be perceived similarly
to that of message exchange \cite{CGD01}. We prove a subject
reduction theorem, stating that the number of resources required to
terminate a process is preserved under reduction.

\begin{theorem}[Subject Reduction]
  \label{subred}
  \index{subject reduction} Assume
  $\Gamma, \Tres\p P:\ok \effect{\Tprov}{\Ttla}; \Delta$ and
  $ P \red Q$, then there are environments $\Gamma' \Tleq \Gamma$ and
  $\Delta'\Tleq \Delta$ such that
  $\Gamma', \Tres '\vdash Q:\ok \effect{\Tprov '}{\Ttla'};\Delta'$ and
  $\Tres '\leq \Tres$ or $\Tres '=\Tres \land \Tprov '\geq \Tprov$.
\end{theorem}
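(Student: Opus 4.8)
The plan is to prove the statement by induction on the derivation of $P\red Q$, after setting up the standard auxiliary machinery. Because this calculus has no name communication, no substitution lemma is needed. First I would establish: (i) \emph{inversion} (generation) lemmas --- a typing of $P_1\parallel P_2$, of $\tamb{\meta n}{P}$, of a guarded $C.P$, or of $\nubind n P$ arises, up to a final use of \rn{T-Sub}, from the matching structural rule; (ii) \emph{subject congruence} --- $P\scong Q$ implies $P$ and $Q$ admit the same typing judgements --- checked axiom by axiom, commutativity and associativity of $\parallel$ mirroring those of $+$ and of the context composition $\Eplus$, scope extrusion mirroring the commutation of \rn{T-Res} past \rn{T-Par} and \rn{T-Amb}, and the laws for $\0$ mirroring \rn{T-Zero}; (iii) \emph{freeze/unfreeze invariance} --- a judgement $\jdm\Gamma\Delta P\Tres{\ok\effect\Tprov\Ttla}$ entails the same judgement for $\freeze P$ and for $\unfreeze P$, which is immediate since \rn{T-Amb} and \rn{T-Consume1} ignore the annotations $\frozen n$ and $\frozen\tcreseat$; and (iv) a \emph{weakening} lemma for extending the assumptions $\Gamma$ with fresh bindings, used whenever restructuring changes which ambients count as ``outside''.

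In the induction the congruence-closure step uses (ii), and the evaluation-context steps for $\parallel$, $\nu$ and $\tamb n\cdot$ use inversion, the induction hypothesis, and re-assembly by the same rule; for $\tamb n P$ this amounts to re-checking the side conditions $\Ttla\Tleq\Tcap$ and $\Tres\times\Tcap\leq\Tresj+\Tprov$ of \rn{T-Amb} on the reduct, which is exactly where the inductively obtained decrease of $\Tres$ --- or increase of $\Tprov$ --- gets consumed. The rule $\tamb n P\red\tamb n{\unfreeze P}$ follows from (iii) with the contract of $n$ unchanged (its negative premise plays no role for soundness). The local tick rules $\tickins\red\tickouts$, $\dunfrozen\tcreseat.P\red\tickins.P$, $\tickouts\parallel\tickins.P\red\freeze P$, and $\tickouts\parallel\tamb{\dunfrozen n}P\red\tamb{\frozen n}{\tickins\parallel P}$ are dispatched by inversion and a short computation, using \rn{T-Sub} to align the coeffects of the two \rn{T-Par} branches; the only delicate point is that in the last two a $\tickouts$ previously counted as a provided resource is pushed into an ambient and hence no longer provided at the outer level, which is precisely the situation the conclusion's disjunction is built to tolerate --- either by re-typing that ambient with a strictly smaller resource capacity (a $\Tleq$-subtype, since \rn{T-Amb}'s premise $\Tres\times\Tcap\leq\Tresj+\Tprov$ gains slack once $\Tprov$ grows by one) or, if that capacity is already zero, by observing that the given coeffect must then have been inflated by subsumption.

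The substantive cases are \rn{R-In}, \rn{R-Out}, and \rn{R-Open}. For \rn{R-In}, where $\tamb{\meta n}{\tcin m.P_1\parallel P_2}\parallel\tamb{\meta m}{Q}\red\tamb{\meta m}{Q\parallel\tamb{\frozen n}{P_1\parallel P_2}}$, inverting through \rn{T-Par}, the two occurrences of \rn{T-Amb}, \rn{T-In} and the inner \rn{T-Par} recovers: the contract $T_n$ of $n$ (the $\Tcur$ under which $P_1\parallel P_2$ is typed), the \emph{inflated} contract $T'_m=\Tamb{\Tresj_m}{\Tcap_m}{\Tts_m+\Tcap_n+1}$ that \rn{T-In} assigns to $m$, the side conditions $\Tcap_m\times\Tres_1\leq\Tresj_m$ and $\Tts_m+\Tcap_n+1\leq\Tcap_m$, and typings of $P_1,P_2$ inside $n$ and of $Q$ inside $m$ under the \emph{un-inflated} $T_m=\Tamb{\Tresj_m}{\Tcap_m}{\Tts_m}$. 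I then rebuild bottom-up: \rn{T-Amb} re-forms $\tamb{\frozen n}{P_1\parallel P_2}$ from $T_n$ (the conditions on $n$ carry over verbatim); \rn{T-Par} places it beside $Q$ inside $m$ --- here $\Tts_m+\Tcap_n+1\leq\Tcap_m$ is exactly what guarantees that $m$ can host the $\Tcap_n+1$ slots the newcomer occupies; and the outer \rn{T-Amb} closes over $m$ with $T_m$, which is $\Tleq T'_m$ as $\Delta'$ requires, and whose coeffect $\Tresj_m$ is no larger than the redex's. \rn{R-Out} is dual: when $n$ leaves $m$ I re-type $m$ with a contract whose hosting (and, if needed, resource) capacity is reduced by the slots $n$ vacated, again a $\Tleq$-subtype, so that in any enclosing ambient the room freed inside $m$ exactly offsets the $\Tcap_n+1$ slots the surfaced $\tamb{\frozen n}{P_1\parallel P_2}$ now needs as a sibling. \rn{R-Open} folds $\freeze{P_2}$ into the parent and removes $n$ from the commitment (allowed since $\Delta'\Tleq\Delta$ permits $\dom(\Delta')\subseteq\dom(\Delta)$), re-typing $\freeze{P_2}$ directly in the parent and using the \rn{T-Amb} premise $\Tres_{P_2}\times\Tcap_n\leq\Tresj_n+\Tprov_{P_2}$ to bound its coeffect by the budget already charged for $n$.

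The main obstacle is the \rn{R-In}/\rn{R-Out} bookkeeping: showing that the worst-case reservation \rn{T-In} bakes into the target ambient's contract is precisely the capacity actually consumed after the move; that the un-inflated (respectively, shrunk) contract is genuinely a $\Tleq$-subtype, so that it may legitimately appear in $\Delta'$; and --- most delicately --- that when a process descends one level in the hierarchy the shift from $\Tcur$ to a named assumption for the former enclosing ambient keeps every enclosing ambient's hosting- and resource-capacity constraints satisfied. Once the inversion and freeze/unfreeze lemmas are in place, everything else reduces to routine arithmetic on the three components of a resource contract.
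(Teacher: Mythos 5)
Your proposal is correct and follows essentially the same route as the paper's proof: a case analysis over the reduction rules in which the typing of the redex is inverted and a typing of the contractum is reassembled, with the subtype ordering on resource contracts absorbing the changes to the hosted-process and provided-resource counts (e.g.\ re-typing the target of \rn{R-In} with its un-inflated contract, and decrementing the resource capacity when a \kw{tick}! is pushed into an ambient). You are more explicit than the paper's appendix about the supporting lemmas (subject congruence, inversion, freeze/unfreeze invariance, closure under evaluation contexts), which the paper leaves implicit, but this does not change the structure of the argument.
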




\begin{proof}
  By induction on the derivation of $P \red Q$ (For details, see Appendix~\ref{app:proof}).
\end{proof} 


Further, we prove a progress theorem, which shows that a well-typed process
which receives the approximated number of resources from its environment
will not get stuck because of missing resources. We use the contextual
variant of barbing from Def.\ \ref{definition:coeff.barbs} to characterize
a situation where inside the process, there is a sub-process in need of a
tick to proceed, be it an unserved ambient or a process guarded by a
$\tickins$-capability. \iflong Note that in the formulation of the progress
property, the process $P$ about which the type system asserts resources
needs, in particular $\Tres$, needs to be embedded into a outermost,
surrounding ambient; only then, the corresponding scheduler can assure that
the $\Tres$ available ticks are handed out. In particular, progress could
not be guaranteed in the case where $P = \tamb{\frozen{m}}{P'}$, i.e., in a
situation, where the process $P$ is of the form of an already served
ambient. The notion of being served (or not) makes only sense in the context
of a surrounding ambient and its associated scheduling mechanism.\fi

\begin{theorem}[Tick progress]
  Assume 
$\Ea;\Tres \vdash P:\ok \effect{\Tprov}{\Ttla};\Ec$
and let
$Q = \tamb{\frozenornot{n}}{P \parallel \tickouts \parallel
  \ldots \parallel \tickouts}$,
where $P$ is running in parallel with $\Tres$ occurrences of
$\tickouts$ inside some enclosing ambient. If
$Q \sbarbsonwith{\tctickin}{\C}$ for some context $\C$, then 
$Q \red{} Q'$ for some process $Q'$.
\end{theorem}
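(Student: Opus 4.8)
The plan is to prove the contrapositive-style statement directly: assuming $Q \sbarbsonwith{\tctickin}{\C}$, exhibit an actual reduction step $Q \red Q'$. The barb $Q \sbarbsonwith{\tctickin}{\C}$ means there is a context $\C$ and a subprocess $P'$ with $Q = \Cwith{P'}$ and $P' \sbarbson{\tctickin}$; by Def.~\ref{definition:coeff.barbs} the latter means $P' \scong (\nu\nmany{m})(\tamb{n'}{P_1'} \parallel P_2')$ or $P' \scong (\nu\nmany{m})(\tctickin.P_1' \parallel P_2')$ with the ambient name $n'$ \emph{not} frozen. So somewhere inside $Q$ there sits either an unfrozen ambient or an unfrozen $\tickins$-guarded process that is ready to consume a tick. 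The crux is then to produce a $\tickouts$ at the same level as that entity, so that one of the scheduling rules of Table~\ref{tab:redsteps.simple.2} ($\tickouts \parallel \tickins.P \red \freeze{P}$ or $\tickouts \parallel \tamb{\dunfrozen{n}}{P} \red \tamb{\frozen{n}}{\tickins \parallel P}$) fires.

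\textbf{Key steps.} First I would peel off the outermost ambient $\frozenornot{n}$ of $Q$ and argue at the level of $P \parallel \tickouts \parallel \cdots \parallel \tickouts$ ($\Tres$ copies). If $\Tres \geq 1$ and the barb-witnessing entity is already at top level of $P$, then a $\tickouts$ is already colocated with it and the relevant rule in Table~\ref{tab:redsteps.simple.2} applies immediately, giving $Q \red Q'$. The interesting case is when the ready entity lies strictly inside a chain of ambients $\tamb{n_1}{\cdots\tamb{n_k}{\cdots}\cdots}$. Here I would do induction on the nesting depth $k$ of the barb: using the typing judgement $\Ea;\Tres \vdash P : \ok\effect{\Tprov}{\Ttla};\Ec$ together with the inversion of rules \rn{T-Amb}, \rn{T-Par}, \rn{T-Consume1/2}, I can show that $\Tres \geq 1$ whenever such a ready entity exists (a $\tickins$-guarded process or an unfrozen $\tcreseat.P$ each contributes $+1$ to the coeffect through \rn{T-Consume2}/\rn{T-Consume1}; an unfrozen ambient carries the coeffect of its contents through \rn{T-Amb} with the constraint $\Tres \times \Tcap \leq \Tresj + \Tprov$). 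Then with at least one $\tickouts$ available at the outermost level, either the outermost ambient $n_1$ in the chain is unfrozen — in which case $\tickouts \parallel \tamb{\dunfrozen{n_1}}{\cdots} \red \tamb{\frozen{n_1}}{\tickins \parallel \cdots}$ fires — or, if $n_1$ is frozen, so are potentially others, but the congruence $Q\scong (\nu\nmany m)(\tamb{n_1}{\cdots}\parallel\cdots)$ from the barb definition guarantees the barb witness is reachable through \emph{unfrozen} ambients at every level, so the chain from the outside consists of unfrozen ambients and the tick propagates down one level, reducing the depth and letting me appeal to the induction hypothesis (the residual process still carries a coeffect $\geq 1$ by the \rn{T-Amb} capacity constraint). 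In the very last case, where the whole of $P$ is already fully served and $\Tres = 0$, the barb $Q\sbarbsonwith{\tctickin}{\C}$ cannot hold — but that case is vacuous under our hypothesis.

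\textbf{Main obstacle.} The delicate part is the interaction between the \emph{frozen} annotations and the barb definition. The barb in Def.~\ref{definition:coeff.barbs} is indifferent to whether the \emph{enclosing} ambients are frozen (it only forbids the witness's own name from being frozen), so I must be careful: a frozen enclosing ambient $\tamb{\frozen{n_1}}{\cdots}$ blocks tick propagation into it via the scheduling rules. The resolution is that the last rule of Table~\ref{tab:redsteps.simple.2}, $\tamb{n}{P}\red\tamb{n}{\unfreeze{P}}$, fires precisely when $\text{not}(P\sbarbson{\tctickin})$ at that level — so either the contents of a frozen sub-ambient can still reduce (someone unfrozen inside can step), or nothing inside barbs, contradicting that the witness is reachable. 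Formalizing ``reachable through unfrozen ambients'' and threading the coeffect bound $\Tres \geq 1$ correctly through each inversion is where most of the care goes; the rest is routine rule inversion and bookkeeping on $\Tres$, $\Tprov$, $\Ttla$.
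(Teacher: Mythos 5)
Your core idea is the same as the paper's: the typing rules (\rn{T-Consume1}, \rn{T-Consume2}, \rn{T-Par}, \rn{T-Amb}) force the coeffect to be at least $1$ whenever an unserved consumer is present, so a $\tickouts$ is available and a scheduling rule of Table~\ref{tab:redsteps.simple.2} fires. Be aware that the paper's own proof is only a two-sentence sketch of exactly this observation, restricted to the case where $P$ contains a subprocess $\tcreseat.P'$; it says nothing about ambient-shaped barb witnesses, nesting depth, or frozen annotations. So your proposal is considerably more thorough than the published argument, and the complications you isolate are real ones that the paper glosses over.

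That said, one step of your ``key steps'' would fail as written: the claim that the contextual barb guarantees the witness is reachable through \emph{unfrozen} ambients at every level is false. Definition~\ref{definition:coeff.barbs} only forbids the witness's \emph{own} name from being frozen; the context $\C$ may wrap it in arbitrarily many frozen ambients, into which no $\tickouts$ can be scheduled, so your tick-propagation induction gets stuck there. Your ``main obstacle'' paragraph spots the problem but resolves it incorrectly: if nothing inside a frozen enclosing ambient barbs, that is not a contradiction with reachability of the witness --- it is simply the premise of the unfreeze rule. The clean repair is to remember that the conclusion only demands \emph{some} reduction $Q \red Q'$, not one that actually serves the witness. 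Walk the chain of ambients enclosing the witness from the outside in and take the first one whose contents do not barb on $\tctickin$: if it exists, the unfreeze rule fires there and you are done, regardless of frozenness. If it does not exist, then already the contents of the outermost ambient barb, so an unfrozen ambient or a $\tctickin$-prefix sits at top level next to the $\Tres$ copies of $\tickouts$; if $\Tres \geq 1$ a scheduling redex is immediate, and in the residual corner where $\Tres = 0$ and the top-level witness is an unfrozen ambient, your inversion of \rn{T-Amb} (the constraint $\Tres \times \Tcap \leq \Tresj + \Tprov$ together with $\Ttla \leq \Tcap$) forces $\Tprov \geq 1$ one level down, i.e.\ a top-level $\tickins$ or $\tickouts$ inside that ambient, which again yields a redex or lets the recursion descend. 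With that restructuring the induction on depth is only needed in this last corner, and the frozen-context difficulty disappears entirely.
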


\begin{proof}
  This follows from the definition of the typing rules. If $P$
  contains the subprocess $\tcreseat.P'$ it follows from the typing
  rule for the consume capability that $\Tres \geq 1$. From the other
  typing rules it follows that $\Tres$ cannot be reduced to
  $\Tres < 1$ and the number of resources is sufficient to trigger the
  reduction $\tcreseat.P' \red P'$. Thus, $P$ can reduce to $Q $ and
  $\Tres>0$.
\end{proof}





With the properties of subject reduction and progress the type system
guarantees the soundness of resource management.
\begin{corollary}[Soundness]
  The type system guarantees the soundness of resource management,
  i.e., the transitive closure of the progress result holds.
\end{corollary}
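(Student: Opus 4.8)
The plan is to obtain the corollary by iterating the two preceding results along reduction sequences. The two ingredients are Subject Reduction (Theorem~\ref{subred}), which guarantees that a single reduction step preserves well-typedness and, importantly, never increases the coeffect $\Tres$, and the Tick progress theorem, which guarantees that a well-typed process, once embedded in an enclosing ambient together with $\Tres$ copies of $\tickouts$, can always make a step as long as it still exhibits a tick-barb. Soundness as stated is precisely the assertion that this ``one-step'' progress, closed under reduction, yields progress from every reachable configuration --- the transitive closure of the progress result.

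Concretely, I would argue by induction on the number $k$ of steps in a reduction sequence $Q_0 \redmany Q_k$ issuing from $Q_0 = \tamb{\frozenornot{n}}{P \parallel \tickouts \parallel \cdots \parallel \tickouts}$, where $\Ea; \Tres \p P : \ok\effect{\Tprov}{\Ttla}; \Ec$ and $Q_0$ carries $\Tres$ occurrences of $\tickouts$. For $k = 0$ the claim is exactly the Tick progress theorem. For the step, I first note that for a suitably chosen resource contract for $n$ the configuration $Q_0$ is itself well-typed: the added $\tickouts$ processes are typed by \rn{T-Tick2}, combined with $P$ via \rn{T-Par} (using \rn{T-Sub} to align coeffects), and absorbed by \rn{T-Amb} once $n$'s hosting and resource capacities are chosen large enough. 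Subject Reduction then applies to each step of the sequence, so every $Q_i$ is well-typed with a coeffect no larger than $\Tres$; re-reading $Q_k$ again as an enclosing ambient around a well-typed body, the hypotheses of the Tick progress theorem are restored at $Q_k$, and $Q_k \sbarbsonwith{\tctickin}{\C}$ then yields $Q_k \red Q'$, which is the transitive-closure property asserted by the corollary.

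The step I expect to be the main obstacle is matching the \emph{precise shape} assumed by the Tick progress theorem at the reducts $Q_k$: an enclosing ambient whose body contains a well-typed process together with \emph{enough} $\tickouts$ processes at that level. Along a reduction a $\tickouts$ may be consumed by a $\tcreseat$-step, descend one level into a subambient, or be regenerated by the round-restart rule, so there is no syntactic invariant ``exactly $\Tres$ ticks are present''; what is maintained is the weaker, semantic invariant that the resources available at the level where a tick is needed dominate the current coeffect, which is exactly what the monotonicity of $\Tres$ under Subject Reduction delivers. I would therefore either state the corollary directly in terms of this invariant, or strengthen the Tick progress theorem so that its hypothesis is visibly stable under reduction; with either in hand the induction is routine and the corollary follows.
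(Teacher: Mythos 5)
Your proposal is correct and follows essentially the same route the paper intends: the corollary is stated without an explicit proof, and the implicit argument is exactly your iteration of Subject Reduction (preserving typability with non-increasing coeffect $\Tres$) and Tick progress along a reduction sequence. Your observation that the syntactic hypothesis of the progress theorem (an enclosing ambient with $\Tres$ parallel $\tickouts$ at the right level) is not literally invariant under reduction, and must be replaced by the semantic invariant that available ticks dominate the current coeffect, is a genuine refinement that the paper itself leaves unaddressed rather than a departure from its approach.
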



\section{Related Work} 
\label{sec:coeff.related}
Gordon proposed a simple formalism for virtualization loosely based on
mobile ambients \cite{GordonVirtual}. The calculus of virtually timed
ambients \cite{johnsen.steffen.stumpf:calculusvta-wadt} stays closer
to the syntax of the original ambient calculus, while including
notions of time and resources. Our model of resources as processing
capacity over time builds on deployment components
\cite{johnsen15jlamp,ABHJ14}, a modelling abstraction for cloud
computing in ABS \cite{johnsen10fmco}. Compared to virtually timed
ambients, ABS does not support nested deployment components nor the
timed capabilities of ambients.  Timers have been studied for mobile
ambients in \cite{AC07b}. In this line of work, timers, which are
introduced to express the possibility of a timeout, are controlled by
a global clock. In contrast, the schedulers in our work recursively
trigger local schedulers in subambients which define the execution
power of the nested virtually timed ambients. Modelling timeouts is a
straightforward extension of our work.  The calculus of virtually
timed ambients presented here differs from earlier papers
\cite{johnsen.steffen.stumpf:calculusvta-wadt,johnsen.steffen.stumpf:calculusvta-jlamp}
by assuming uniform time and by the use of freezing and unfreezing
operations, which allow a significantly simpler formulation of the
calculus. The behavior of the original calculus, with non-uniform
time, can be recaptured by modifying the rule
$\tickins \red \tickouts$ to cater for different numbers of input and
output ticks, and to contextualize the rule for specific ambients. For
the virtually timed ambients with non-uniform time, a modal logic with
somewhere and sometime modalities has been developed
\cite{johnsen18ictac} to express aspects of reachability for
these ambients. Whereas this work can express more complex
properties of a given process, the logic cannot express properties for
all processes, in contrast to the contract-based type system presented
in this paper.


A type system for the (originally untyped) ambient calculus was
defined in \cite{CGD01}; this type system is mainly concerned with the
use of groups to control communication and mobility. For
communication, a basic type of an ambient captures the kind of
messages that can be exchanged within. For mobility, the type system
controls which ambients can enter. In a more traditional setting of
sequential languages, types are often enriched with effects to capture
the aspects of of computation which are not purely functional. In
process algebra, session types have been used to capture communication
in the $\pi$-calculus. Orchard and Yoshida have shown that effects and
session types are similar concepts as they can be expressed in terms
of each other \cite{OY16}. Session types have been defined for boxed
ambients in \cite{GC000} and behavioral effects for the ambient
calculus in \cite{A08}, where the original communication types by
Cardelli and Gordon are enhanced by movement behavior. This is
captured with traces, the flow-sensitivity hereby results from the
copying of the capabilities in the type.  Type-based resource control
for resources in the form of locks has been proposed for process
algebras in general~\cite{igarashi2002resource} and for the
$\pi$-calculus in
particular~\cite{kobayashi2006resource,kobayashi2010hybrid}.

The idea
of assumptions and commitments (or relies and guarantees) is quite
old, and has been explored in various settings, mainly for
specification and compositional reasoning about concurrent or parallel
processes (e.g., \cite{abadi.lamport:conjoining,jones:tentative,lamport:modules,misra.chandy:networks,stark:prooftechnique}).
Assumption commitment style type systems have previously been used for
multi-threaded concurrency \cite{abraham*:futures,abraham.gruener.steffen:heapabstraction-cie}; the resources
controlled by the effect-type system there are locks and a general
form of futures, in contrast to our work.

To capture
how a computation depends on an environment instead of how the
computation affects it, Petricek, Orchard and Mycroft suggest the term
\emph{coeffect} as a notion of \emph{context-dependent} computation
\cite{POM13,POM14}. Dual to effects, which can be modeled monadically,
the semantics of coeffects is provided by indexed
comonads~\cite{katsumata:parametric,uustalu.vene:comonadic}. 
We use coeffects to control time and resources.  An approach to
control timing via types can be found in
\cite{berger.yoshida:timeddistributed}, which develops types and typed
timers for the timed $\pi$-calculus.  Another approach to resource
control without coeffects can be found in \cite{HENNESSY200282}, which
proposes a type system to restrict resource access for the distributed
$\pi$-calculus. In \cite{TZH02} a type system for resource control for
a fragment of the mobile ambients is defined by adding capacity and weight
to communication types for controlled ambients. Simplified
non-modifiable mobile ambients with resources, and types to control
migration and resource distribution are proposed in
\cite{godskesen2002calculus}. Another fragment of the ambient calculus, finite
control ambients with only finite parallel composition, are covered in
\cite{charatonik2002finite}. Here the types are a bound to the number of
allowed active outputs in an ambient.


\section{Concluding Remarks}
\label{sec:coeff.conclusion}

Virtualization opens for new and interesting models of computation by
explicitly emphasizing deployment and resource management.  This paper
introduces a type system based on resource contracts for virtually
timed ambients, a calculus of hierarchical locations of execution with
explicit resource provisioning.  Resource provisioning in this
calculus is based on virtual time, a local notion of time reminiscent
of time slices provisioned by operating systems in the context of
nested virtualization.  The proposed assumption-commitment type system
with effects and coeffects enables static checking of timing and
resource constraints for ambients and gives an upper bound on the
resources used by a process.
The type system supports subsumption, which allows relating
different types, e.g. weaker types, to each other. 
We show that the proposed type system is sound in terms of
subject reduction and a progress properties.  Although these are core
properties for type systems, the results are here given for a
non-standard assumption-commitment setting in an operational
framework.
The type system further provides reusable properties as it supports
abstraction and the results would also hold for other operational
accounts of fair scheduling strategies.  The challenge of how to
further generalize the distribution strategy and type system for,
e.g., earliest deadline first or priority-based scheduling policies,
remains.

The virtually timed ambients used for the models in this paper extend
the basic ambient calculus without channel communication.  Introducing
channels would lead to additional synchronization, which could
potentially be exploited to derive more precise estimations about
resource consumption. Such an extension would be non-trivial as
the analysis of the communication structure would interfere with
scheduling.


\newpage
\appendix
\section{Proof of Theorem 1}\label{app:proof}


The proof proceeds by cases over the reduction rules of tables~\ref{tab:redsteps.simple.1}
and \ref{tab:redsteps.simple.2}. In each case, we assume that the
pre-state is well-typed and show that this assumption allows us to
construct a type derivation for the post-state.

\paragraph{Case \rn{R-In}.}
Assume that $\tamb{\meta{n}}{\tacin~m.P_1\parallel P_2} \parallel
\tamb{\meta{m}}{Q}$ is well-typed. Consequently, for some values
$\Tresj$ and $\Tcap$ we have types
$T_1=\langle \Tresj,\Tcap,\Tts_1 \rangle$,
$T_2=\langle\Tresj,\Tcap,\Tts_2 \rangle$  and 
$T_3=\langle \Tresj,\Tcap,\Tts_3 \rangle$ such that
the following assumptions hold:\\
(1) $\Gamma_1, m:T_1; r_1\vdash P_1:\ok \effect{p_1}{s_1}; \Delta_1$, 
(2) $\Gamma_2, m:T_2; r_2\vdash P_2:\ok \effect{p_2}{s_2}; \Delta_2$
and
(3) $\Gamma_3,\Tcur:T_3; r_3\vdash Q:\ok \effect{p_3}{s_3};\Delta_3$.

\medskip

\noindent
It follows from Assumption~3 by \rn{T-Amb} that
$$\Gamma_3; \Tresj\vdash  \tamb{m}{Q}:\ok \effect{0}{s_3+1};m:T_3,\Delta_3.$$
Let $n$ be typed by $\langle\Tresj',\Tcap\,',\Tts'\rangle$ and
 let $T_1'= \langle \Tresj,\Tcap,\Tts_1+\Tcap\,'+1 \rangle$.  It
follows from Assumption~1 by \rn{T-In} that
$$\Gamma_1,m:T_1', r_1\vdash \tacin{m}.P_1:\ok\effect{p_1}{s_1};\Delta_1$$
Let $T'_{12}=T_1'\oplus T_2= \langle \Tresj,\Tcap,\Tts_1+\Tcap\,' +1+\Tts_2\rangle$.
Since $\Gamma_1 : \ok$, we know that
$\Tcap\geq \Tts_1+\Tcap\,' +1+\Tts_2$. Let
$r_{12}=\max\{r_1,r_2\}$. From assumptions 1 and 2, we get 
from \rn{T-Par} that
\begin{equation}\label{eq1}
\Gamma_1\oplus \Gamma_2, m:T'_{12};r_{12},\vdash
\tacin{m}.P_1\parallel P_2 :\ok\effect{p_1+p_2}{s_1+s_2};\Delta_1,\Delta_2
\end{equation}
Let $s_{12}=s_1+s_2+1$. Since $r_{12}\times\Tcap\,'\leq \Tresj'$, we get from \rn{T-Amb} that
$$\Gamma_1\oplus\Gamma_2, m:T'_{12};\Tresj',\vdash
\tamb{\meta{n}}{\tacin\ m.P_1\parallel P_2}
:\ok\effect{0}{s_{12}}; n:\langle\Tresj',\Tcap\,',\Tts'\rangle, \Delta_1,\Delta_2.$$
Let $\Gamma= \Gamma_1\oplus \Gamma_2\oplus\Gamma_3$,
$\Delta=\Delta$, 
$T'_{123}=T_{12}'\oplus T_3=\langle \Tresj,\Tcap,
\Tts_1+\Tcap\,' +1+\Tts_2+\Tts_3\rangle$ and $r=\max\{\Tresj,\Tresj'\}$. Then, by \rn{T-Par}, we have 
$$\Gamma;r,\vdash
\tamb{\meta{n}}{\tacin{m}.P_1\parallel P_2} \parallel \tamb{m}{Q}:\ok \effect{p_3}{s_1+s_2+s_3+2}; m:T'_{123},\Delta.$$

\smallskip

\noindent
Now, we show that 
$\tamb{\meta{m}}{Q\ \parallel \tamb{\frozen{n}}{P_1\parallel P_2}}$ 
is well-typed.
From Equation~\ref{eq1}, we get 
$\Gamma_1\oplus \Gamma_2, m:T'_{12};r_{12}\vdash P_1\parallel P_2:\ok\effect{p_1+p_2}{s_1+s_2};\Delta_1,\Delta_2$,
and by \rn{T-Amb}
$$\Gamma_1\oplus \Gamma_2, m:T'_{12};\Tresj'\vdash
\tamb{\frozen{n}}{P_1\parallel P_2}: \ok\effect{0}{s_{12}};n:\langle\Tresj',\Tcap\,',\Tts'\rangle;\Delta_1,\Delta_2$$
It follows by \rn{T-Par} that 
$$\Gamma, m:T'_{123};r\vdash
\tamb{\frozen{n}}{P_1\parallel P_2} \parallel  Q:
\ok\effect{p_3}{s_1+s_2+s_3+2};\Delta$$
Let $r=\max\{ \}$. By \rn{T-Amb}, it follows that
$$\Gamma, m:T'_{123};\Tresj\vdash
\tamb{m}{Q\ \parallel \tamb{\frozen{n}}{P_1\parallel P_2}}: \ok\effect{0}{s_1+s_2+s_3+2};\Delta.$$
Since $\Tresj\leq r$, the case holds.

\bigskip

\paragraph{Case \rn{R-Out}.}
Assume that $ \tamb{\meta{m}}{\tamb{\meta{n}}{\tacout~m.P_1\parallel
    P_2} \parallel Q}$ is well-typed.
Consequently, 
the following assumptions hold:
(1) $\Gamma_1; r_1\vdash P_1:\ok \effect{p_1}{s_1}; \Delta_1$, 
(2) $\Gamma_2; r_2\vdash P_2:\ok \effect{p_2}{s_2}; \Delta_2$
and
(3) $\Gamma_3; r_3\vdash Q:\ok \effect{p_3}{s_3};\Delta_3$.

\medskip

\noindent
From Assumption~1, by \rn{T-Out}, we obtain
$\Gamma_1,; r_1\vdash \tacout{m}.P_1:\ok \effect{p_1}{s_1};
\Delta_1$.
Let 
$r_{12}=\max\{r_1,r_2\}$. It then follows from
Assumption~2, by \rn{T-Par}, that
\begin{equation}\label{eq3}
\Gamma_1\oplus \Gamma_2; r_{12}\vdash \tacout{m}.P_1\parallel P_2:\ok \effect{p_1+p_2}{s_1+s_2};
\Delta_1,\Delta_2.
\end{equation}
Let $T_n=\langle \Tresj_n,\Tcap_n,\Tts_n \rangle$. Then, by \rn{T-Amb}, we obtain
$$\Gamma_1\oplus \Gamma_2; \Tresj_n\vdash \tamb{\meta{n}}{\tacout~m.P_1\parallel
    P_2}:\ok \effect{0}{s_1+s_2+1};n:T_n,
\Delta_1,\Delta_2.$$
Now, let $\Gamma= \Gamma_1\oplus \Gamma_2\oplus \Gamma_3$, $\Delta= \Delta_1,\Delta_2,\Delta:3$,
$r_{123}=\max\{\Tresj_n,r_3\}$ and $s_{123}=s_1+s_2+s_3+1$.
We get from Assumption~3 using \rn{T-Par} that 
$$\Gamma; r_{123}\vdash \tamb{\meta{n}}{\tacout~m.P_1\parallel
    P_2} \parallel Q:\ok \effect{p_3}{s_{123}};n:T_n,
\Delta.$$
Let $T_m=\langle \Tresj_m,\Tcap_m,\Tts_m \rangle$ and $r=\{r_{123},\Tresj_m\}$. From \rn{T-Amb}, 
$$\Gamma; r\vdash \tamb{\meta{m}}{\tamb{\meta{n}}{\tacout~m.P_1\parallel
    P_2} \parallel Q}:\ok \effect{0}{s_{123}+1};m:T_m,n:T_n,\Delta. $$

\smallskip

\noindent
Now, we show that 
$\tamb{\frozen{n}}{P_1\parallel P_2} \parallel \tamb{\meta{m}}{Q}$ 
is well-typed. From Equation~\ref{eq3} and \rn{T-Out}, we know that 
$$\Gamma_1\oplus \Gamma_2; r_{12}\vdash P_1\parallel P_2:\ok \effect{p_1+p_2}{s_1+s_2};
\Delta_1,\Delta_2$$
and, by \rn{T-Amb}, we obtain
$$\Gamma_1\oplus \Gamma_2; \Tresj_n\vdash \tamb{\frozen{n}}{P_1\parallel P_2}:\ok  \effect{0}{s_1+s_2+1};n:T_n,
\Delta_1,\Delta_2.$$
Now, let $T_m'=\langle r_3,s_3,s_3 \rangle$. It follows from
Assumption~3 by \rn{T-Amb} that
$$\Gamma_3; r_3\vdash \tamb{\meta{m}}{Q}:\ok  \effect{0}{s_3+1};m:T_m',
\Delta_3$$
and, by \rn{T-Par}, that 
$$\Gamma; r_{123}\vdash \tamb{\frozen{n}}{P_1\parallel P_2} \parallel \tamb{\meta{m}}{Q}:\ok
\effect{0}{s_{123}+1}; m:T_m',n:T_n,
\Delta.
$$
We know that $r_{123}\leq r$ and $m:T_m',n:T_n, \Delta \Eleq
m:T_m,n:T_n,\Delta$, which closes the case.

\paragraph{Case \rn{R-Open}.}
Assume that $\tacopen{n}.P_1 \parallel \tamb{\meta{n}}{P_2}$ is well-typed.
Consequently, 
the following assumptions hold:
(1) $\Gamma_1; r_1\vdash P_1:\ok \effect{p_1}{s_1}; \Delta_1$, and
(2) $\Gamma_2; r_2\vdash P_2:\ok \effect{p_2}{s_2}; \Delta_2$.

\medskip

\noindent
It follows from Assumption~1 that $\Gamma_1; r_1\vdash \tacopen{n}.P_1
:\ok \effect{p_1}{s_1}; \Delta_1$, Let $n$ be typed by some resource
contract $T=\langle \Tresj,\Tcap,\Tts \rangle$. Then, from
Assumption~2, \rn{T-Amb} gives us
$$\Gamma_2; \Tresj\vdash  \tamb{\meta{n}}{P_2}:\ok \effect{0}{s_2+1};
n:T, \Delta_2$$
where we know that $r_2\times \Tcap \leq \Tresj + p_2$ and $s_2\leq
\Tcap$. Let $r=\max\{ r_1,\Tresj\}$.
It now follows from \rn{T-Par} that 
$$\Gamma_1\oplus \Gamma_2; r\vdash \tacopen{n}.P_1 \parallel \tamb{\meta{n}}{P_2}:\ok \effect{p_1}{s_1+s_2+2};
n:T, \Delta_2
$$

\smallskip

\noindent
Now, we show that 
$P_1 \parallel \frozen{P_2}$
is well-typed. 
Let  $r_{12}=\max\{ r_1,r_2\}$.
From assumptions~1 and 2, we know by \rn{T-Par} that
$$\Gamma_1\oplus \Gamma_2; r_{12} \vdash  P_1 \parallel \frozen{P_2}:\ok \effect{p_1+p_2}{s_1+s_2};
n:T, \Delta_2.$$
We know that $r_{12}\leq r$ and $0\leq p_1+p_2$, which closes the case.

\paragraph{Case $\tickins \red{} \tickouts$.}
This case is immediate as we have (\rn{T-Tick1})
$\emptyset;0\vdash \tickins :\ok\effect{1}{0}; \emptyset$
and (\rn{T-Tick2})
$\emptyset;0\vdash \tickouts :\ok\effect{1}{0}; \emptyset$.

\paragraph{Case $ \dunfrozen{\tcreseat}.P \red{} \tickins.P$.}
Assume that $\Gamma;r\vdash P:\ok\effect{p}{s}; \Delta$.  Let
$s'=\max\{ s,1\}$.  This case is immediate as we have, by
\rn{T-Consume1}, $\Gamma;r+1\vdash
\dunfrozen{\tcreseat}.P:\ok\effect{p}{s'}; \Delta$ and, by
\rn{T-Consume2} $\Gamma;r+1\vdash \tickins.:\ok\effect{p}{s'};
\Delta$.

\paragraph{Case $ \tickouts \parallel \tickins.P \red{} \freeze{P}$.}
Assume
$\Gamma;r\vdash P:\ok\effect{p}{s}; \Delta$.
Then by \rn{T-Consume}, 
$\Gamma;r+1\vdash  \tickins.P:\ok\effect{p}{s};\Delta$
and, by \rn{T-Par}, 
$$\Gamma;r+1\vdash   \tickouts \parallel \tickins.P:\ok\effect{p+1}{s};\Delta$$
From the assumption,  we have $\Gamma;r\vdash \freeze{P}:\ok\effect{p}{s};\Delta$
and since $r\leq r+1$ the case holds.

\paragraph{Case $ \tickouts \parallel \tamb{\dunfrozen{n}}{P}\red{}
  \tamb{\frozen{n}}{\tickins \parallel P}$.}
Let $n$ have resource contract $\langle \Tresj,\Tcap,\Tts \rangle$ and assume
$$\Gamma,r\vdash P:\ok\effect{p}{s};\Delta.$$ 
Since $\tamb{\dunfrozen{n}}{P}$ is well-typed, we have 
$\Tres \times \Tcap\leq \Tresj+p$
such that 
$$\Gamma;\Tresj\vdash \tamb{\dunfrozen{n}}{P}:\ok\effect{p}{s+1}; n:\langle \Tresj,\Tcap,\Tts \rangle;\Delta.$$
We then have that 
$$\Gamma; r\vdash \tickins \parallel P:\ok\effect{p+1}{s+1};\Delta$$ 
Consequently
$$\Gamma,\Tresj-1\vdash  \tamb{\frozen{n}}{\tickins \parallel
  P}:\ok\effect{p+1}{s+1}; n:\langle \Tresj-1,\Tcap,\Tts
\rangle, \Delta$$
and, since $\Tresj-1\leq \Tresj$, the case holds.

\paragraph{Case $ \tamb{n}{P} \red{} \tamb{n}{\unfreeze{P}}$ (New
  round).}
Let $\Gamma \vdash n: \langle \Tresj,\Tcap,\Tts \rangle$.  We can
assume $\Gamma, r\vdash P : \ok\effect{p}{s}$ such that
$\Gamma, \Tresj\vdash \tamb{n}{P} :: \ok\effect{0}{s+1}$. It
follows that
$\Gamma, r\vdash \unfreeze{P} : \ok\effect{p}{s}$ and
consequently
$\Gamma, \Tresj\vdash  \tamb{n}{\unfreeze{P}} :: \ok\effect{0}{s+1}$.

\qed



\end{document}
